\newtheorem{defi}{Definition}
\newtheorem{theorem}{Theorem}
\crefname{figure}{Fig.}{Figs.}
\crefname{table}{Table}{Tables}
\crefname{equation}{Equation}{Equations}
\begin{document}

\title{Transferring between Sparse and Dense Matching via Probabilistic Reweighting}

\author{Ya Fan and Rongling Lang\textsuperscript{$*$}
\thanks{*Corresponding author.}
\thanks{E-mail of Corresponding author: ronglinglang@buaa.edu.cn.}
\thanks{Ya Fan and Rongling Lang are with the School of Electronic Information Engineering, Beihang University, Beijing 100191, China.}
}

\markboth{IEEE TRANSACTIONS ON CIRCUITS AND SYSTEMS FOR VIDEO TECHNOLOGY}%
{Transferring between Sparse and Dense Matching via Probabilistic Reweighting}

\IEEEpubid{0000--0000/00\$00.00~\copyright~2021 IEEE}

\maketitle

\begin{abstract}
Detector-based and detector-free matchers are only applicable within their respective sparsity ranges.
To improve the adaptability of existing matchers, this paper introduces a novel probabilistic reweighting method.
Our method is applicable to Transformer-based matching networks and adapts them to different sparsity levels without altering network parameters. 
The reweighting approach adjusts attention weights and matching scores through feature detection probabilities.
We prove that the reweighted matching network is asymptotically equivalent to the detector-based matching network.
Furthermore, we propose a sparse training and pruning pipeline for detector-free networks based on reweighting.
Reweighted versions of SuperGlue, LightGlue, and LoFTR are implemented and evaluated on matching tasks including relative pose estimation and visual localization across different levels of sparsity.
Experiments show that the reweighted dense matching improves the pose accuracy of detector-based matchers in low-texture indoor scenes.
The performance of reweighted sparse LoFTR is comparable to detector-based matchers, demonstrating flexibility in balancing accuracy and computational complexity.

\end{abstract}    
\section{Introduction}

Feature matching is a fundamental technology for many computer vision applications such as structure from motion (SFM)~\cite{schonbergerStructuremotionRevisited2016,gaoCompleteSceneReconstruction2020}, simultaneous localization and mapping (SLAM)~\cite{chenSurveyDeepLearning2020} and re-localization~\cite{lynenLargescaleRealtimeVisual2020}.
Recent advancements in learning-based feature matching have led to the emergence of two paradigms: detector-based matching~\cite{sarlinSuperGlueLearningFeature2020a,lindenbergerLightglueLocalFeature2023a} and detector-free matching~\cite{sunLoFTRDetectorFreeLocal2021,edstedtRoMaRobustDense2024}. 
For a pair of images to be matched, detector-based methods first utilize a detector~\cite{detoneSuperPointSelfSupervisedInterest2018a,tyszkiewiczDISKLearningLocal2020a} to extract two sets of feature points from the images, and then compute point-to-point correspondences within these sets. 
On the other hand, detector-free methods compute feature maps of the images and directly use all features for matching.

The two matching paradigms differ in the sparsity of features and exhibit complementarity. 
SuperPoint~\cite{detoneSuperPointSelfSupervisedInterest2018a} and SuperGlue~\cite{sarlinSuperGlueLearningFeature2020a} represent a classic detector-based matching pipeline, where SuperPoint extracts feature points, and SuperGlue aggregates these features to establish correspondences through optimal transport. 
Since SuperGlue typically takes sparse feature points as input, its matching process is consequently termed sparse matching.
Sparse matching is lightweight but prone to failures under significant variations in factors such as texture, viewpoint, illumination, and motion blur. 
With advancements in computational power, detector-free methods have gained increasing popularity. 
As a representative detector-free approach, LoFTR~\cite{wangEfficientLoFTRSemidense2024a} first extracts feature maps from images using a backbone network, then directly aggregates the features through a coarse-to-fine matching network. 
Since the coarse matching stage utilizes all features from the coarse feature maps, this stage is termed dense matching. 
Dense matching can leverage more information from the images, thereby being more resilient to variations that are difficult for sparse matching, albeit at a higher computational cost. 
In practical applications, where diverse image pairs pose varying degrees of matching difficulty, combining the two paradigms and dynamically adjusting the sparsity of matching according to the difficulty helps to balance precision and efficiency.
However, existing matching networks are only applicable within their respective sparsity ranges. 
Detector-based methods are exclusively trained on sparse feature points.
When the feature sparsity exceeds the sparsity range during training, the network encounters generalization issues, making it difficult to directly match the features. 
Among these methods, LightGlue~\cite{lindenbergerLightglueLocalFeature2023a} incorporates point pruning and depth pruning mechanisms to balance accuracy and efficiency, yet its feature points remain constrained within the sparse domain. 
Conversely, detector-free networks are fundamentally designed to operate within dense feature maps, and they lack effective mechanisms to enable sparse feature matching while preserving their original dense matching capabilities.

\IEEEpubidadjcol

It is noteworthy that the matching paradigms employ the attention mechanism~\cite{vaswaniAttentionAllYou2017}, which is inherently able to process an arbitrary number of features. 
This implies that the same matcher can modulate its matching sparsity simply by varying the quantity of input features.
However, due to the different distributions of features in sparse and dense matching, altering the number of features matched by the network causes generalization issues.
As illustrated in \cref{fig:transfer_to_dense} (a) and (b), SuperGlue struggles to generate a sufficient number of matches from sparse feature points for challenging image pairs. 
Expanding the feature points to dense representation and directly applying SuperGlue for matching these features would conversely lead to degraded matching performance or even failure to establish valid correspondences.

To adapt a pretrained attention network to a matching scenario where feature sparsity exceeds the training sparsity range, we propose a reweighting method in which the attention weights and matching scores are adjusted by feature detection probabilities. 
For detector-based networks, we first analyze the distribution shift and length generalization issues between sparse and dense matching from the perspective of random sequences. 
Subsequently, we propose to eliminate the distribution shift by matching random sequences whose marginal distributions are equal to the detection probability distribution.
We further prove that the reweighted matching constitutes the asymptotic limit of detector-based matching on random sequences. 
This theoretical result effectively transforms the matching of sufficiently long random sequences into a deterministic matching algorithm, where the sparsity of the matching is determined by the support of the probability distribution.
Based on our reweighting method, existing detector-based networks are transferred to a dense matching paradigm while accounting for the keypoint distribution estimated by the detector, as shown in~\cref{fig:transfer_to_dense} (c).
For detector-free networks, we propose a sparse training and pruning pipeline, optimizing the feature probabilities to allow them to perform sparse reweighted matching.
During training, a score head is trained to estimate sparse probability scores that maximize the reweighted matching performance. 
At inference time, pruning and reweighting are performed based on the probability scores, without altering the backbone or matching network parameters. 
This design preserves the original dense matching capability inherent in detector-free methods while simultaneously enabling efficient sparse matching functionality.
Moreover, our reweighted attention function is applicable to a wide range of Transformers. 
We reweight the mainstream SuperGlue, LightGlue, and LoFTR and evaluate them on matching tasks including relative pose estimation and visual localization at different sparsity levels.

\begin{figure}[tb]
  \centering
  \includegraphics[width=\linewidth, trim={0.0cm, 0.0cm, 0.0cm, 1.0cm}]{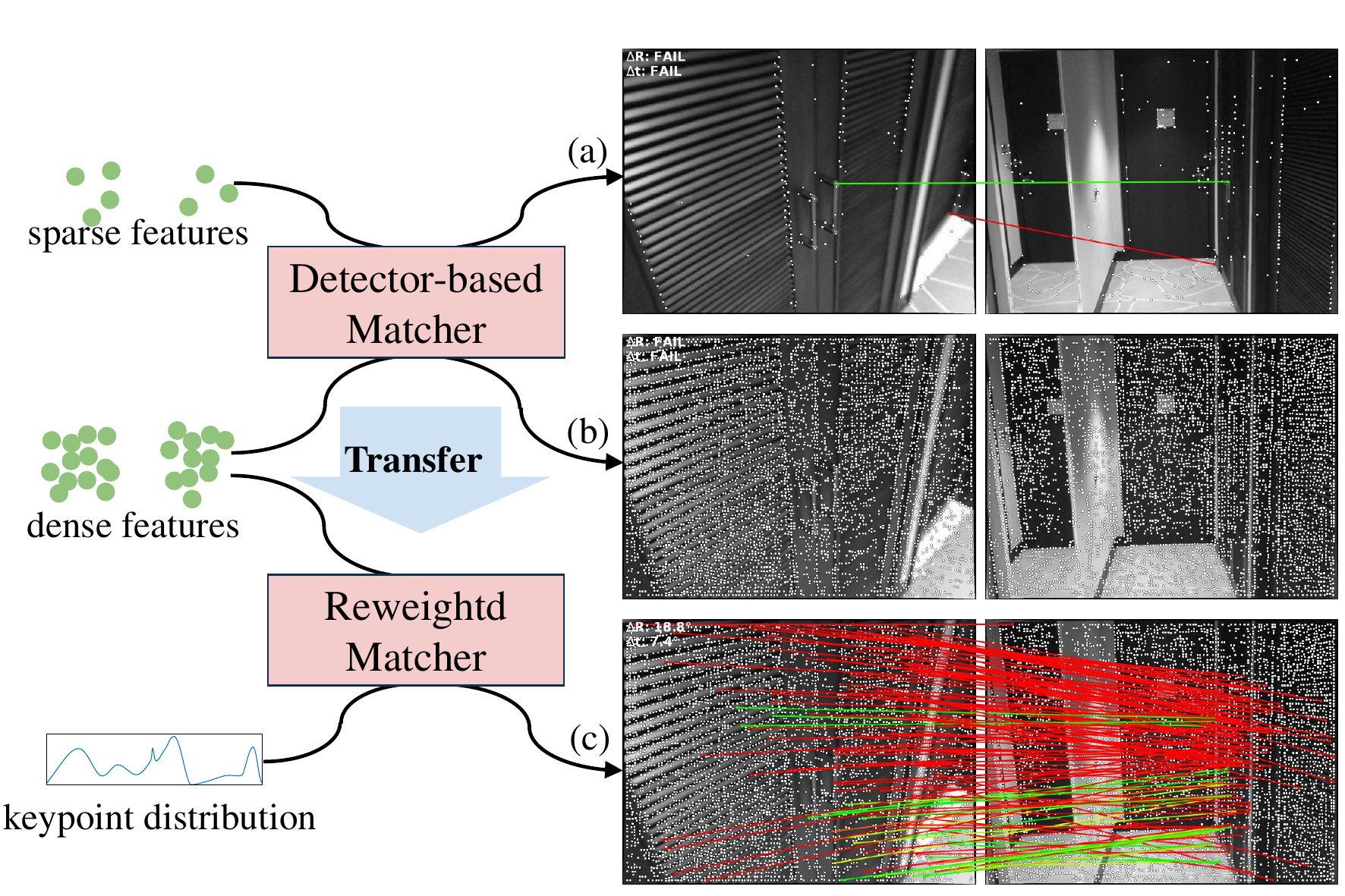}%
  \caption{Transferring detector-based matchers to dense matching.
  The original detector-based matcher struggles to match the challenging image pair with either sparse or dense feature points. 
  Given the keypoint distribution, our method adjusts attention weights and matching scores at inference time, and better matches dense feature points. 
  The results are from the original and reweighted SuperGlue~\cite{sarlinSuperGlueLearningFeature2020a} using SuperPoint~\cite{detoneSuperPointSelfSupervisedInterest2018a} features.
  }
  \label{fig:transfer_to_dense}%
\end{figure}

The main contributions of our work are as follows:
\begin{itemize}
	\item We propose a novel probabilistic reweighting method that enables sparsity transfer for Transformer-based matching networks without parameter modification, and theoretically prove that it constitutes the asymptotic limit of detector-based matching on long random sequences;
	\item By implementing reweighted detector-based matchers including SuperGlue and LightGlue, we demonstrate superior indoor performance of reweighted dense matching over original sparse paradigms in relative pose estimation and visual localization tasks;
	\item We present a reweighting-based sparse training and pruning pipeline for detector-free matchers, which enables sparse matching and preserves their dense performance, achieving comparable accuracy to sparse baselines while offering efficiency-accuracy trade-offs complementary to LightGlue.
\end{itemize}
\section{Related Work}\label{sec:related}
\subsection{Detector-based Matching} 
Classical detector-based matching methods focus on designing keypoints with correspondences obtained through robust matching~\cite{fischlerSurveyDeepLearningsample1981,liuBBHomographyJointBinary2015}. 
FAST~\cite{rostenMachineLearningHighSpeed2006} and Harris~\cite{harrisCombinedCornerEdge1988} pioneered keypoint detection.
SIFT~\cite{loweDistinctiveImageFeatures2004} employed a scale-space Difference-of-Gaussian operator to localize scale-invariant keypoints.
SURF~\cite{baySURFSpeededRobust2006} accelerated feature detection through integral images and Hessian matrix approximation.
Two corner detection methods were presented based on Log-Gabor wavelet transform~\cite{gaoMultiscaleCornerDetection2007a}.
BRIEF~\cite{calonderBRIEFComputingLocal2012} used random sampled point pairs for binary description.
Subsequent research~\cite{rubleeORBEfficientAlternative2011, liuFeaturesCombinedBinary2020} efforts further enhanced handcrafted binary descriptors.

Recent approaches train neural networks for keypoint detection and description. 
TILDE~\cite{verdieTILDETemporallyInvariant2015} trained a regressor to predict a score map with SIFT ground truth.
SuperPoint~\cite{detoneSuperPointSelfSupervisedInterest2018a} first established synthetic dataset to guide detector training, then bootstrapped the score map on real images with homographic adaption strategy, and its descriptors were trained with triplet loss.
D2-Net~\cite{dusmanuD2NetTrainableCNN2019} and R2D2~\cite{revaudR2d2ReliableRepeatable2019} introduced joint learning for detection and description pipelines.
DISK~\cite{tyszkiewiczDISKLearningLocal2020a} and PosFeat~\cite{liDecouplingMakesWeakly2022} employed probabilistic method to relax cycle-consistent matching and simplified reinforce learning.
MicKey~\cite{barroso-lagunaMatching2DImages2024a} proposed a random sampling strategy with differentiable RANSAC~\cite{fischlerRandomSampleConsensus1981} to optimizes 3D keypoint coordinates.
SketchDesc~\cite{yuSketchDescLearningLocal2021} introduced learned descriptor to match sketches.
TCDesc~\cite{panTCDescLearningTopology2022} improved topology consistency of descriptors.
AANet~\cite{raoLearningGeneralDescriptors2023} adopted attentional aggregation to boost the informativeness of descriptors.
DomainFeat~\cite{xuDomainFeatLearningLocal2024} adapted feature points to various image domains.
For efficient feature points, CDbin~\cite{yeCDbinCompactDiscriminative2020} designed shallow network for binary description.
And XFeat~\cite{potjeXFeatAcceleratedFeatures2024} introduced lightweight detection and description network that enables efficient semi-dense matching.

Learning-based feature matching methods have been proposed to better aggregate information of feature points. 
SuperGlue~\cite{sarlinSuperGlueLearningFeature2020a} used a Graph Neural Network (GNN) for feature matching.
The GNN of SuperGlue is specifically a Transformer~\cite{vaswaniAttentionAllYou2017} which is effective on both handcrafted and learned feature points. 
Scale-Net~\cite{fuLearningReduceScale2023a} focused on covisible image areas for matching with large scale changes.
Recent works have sought to enhance the efficiency of feature matching.
Seeded Graph Matching Network~\cite{chenLearningMatchFeatures2021} leveraged a seeding mechanism to sparsify fully connected graphs.
ClusterGNN~\cite{shiClustergnnClusterbasedCoarsefine2022} adopted a progressive clustering strategy to partition feature points into subgraphs, reducing quadratic complexity.
AMatFormer~\cite{jiangAMatFormerEfficientFeature2024} reduced computation cost by selecting anchors.
Notably, LightGlue~\cite{lindenbergerLightglueLocalFeature2023a} enables adaptive computation and demonstrates gains in both efficiency and accuracy.
Although detector-based matching approaches are computationally efficient, they often suffer from performance degradation in low-texture areas owing to the insufficient density of feature points.

\subsection{Detector-Free Matching} 
Unlike detector-based matching, detector-free matching employs dense descriptors or features of images for matching. 
SIFT Flow~\cite{liuSiftFlowDense2010} achieved scene alignment using densified classical descriptors. 
Early learning-based methods utilized contrastive loss to learn dense descriptors~\cite{choyUniversalCorrespondenceNetwork2016a,schmidtSelfSupervisedVisualDescriptor2017}. 
More recent approaches learned dense matching using dense 4D cost volumes~\cite{roccoNCNetNeighbourhoodConsensus2022,liDualresolutionCorrespondenceNetworks2020a,truongLearningAccurateDense2021}. 

Transformers are also utilized to match dense features.
LoFTR~\cite{sunLoFTRDetectorFreeLocal2021} leveraged self and cross linear attention to aggregate feature descriptors conditioned on two-view images and achieved promising performance.
COTR~\cite{jiangCOTRCorrespondenceTransformer2021} matched images jointly by self attention.
Convolutional backbone network is required in both LoFTR and COTR to extract dense features.
Matchformer~\cite{wangMatchformerInterleavingAttention2022} merged convolution into Transformer as positional patch embedding, and proposed an extract-and-match scheme.
Wang et al. adopted convolution and max-pooling into attention to aggregate features~\cite{wangEfficientLoFTRSemidense2024a}. 
ASpanFormer~\cite{chenASpanFormerDetectorFreeImage2022} incorporated flow and uncertainty estimation to guide adaptive attention.
QuadTree~\cite{tangQuadTreeAttentionVision2022} restricts the attention span during hierarchical attention.
For dense geometric matching, PMatch~\cite{zhuPMatchPairedMasked2023} leveraged paired masked image modeling with extended LoFTR module, and PATS~\cite{niPatsPatchArea2023} used patch area transportation and subdivision to tackle large scale differences.
Probability functions are adopted to model the dense match.
PDC-Net++~\cite{truongPDCNetEnhancedProbabilistic2023} employed mixture Laplacian distribution and DKM~\cite{edstedtDKMDenseKernelized2023} used Gaussian processes for regression of coarse match coordinates.
RoMa~\cite{edstedtRoMaRobustDense2024} replaced the backbone and coordinate decoder of DKM with a Transformer and demonstrated its robustness.
While detector-free methods achieve strong performance, they are computationally intensive, and the scheme that matches all image features exhibits limited flexibility.

\subsection{Length Generalization of Transformers} 
In our framework, features with varying levels of sparsity or quantity are treated as sequences of different lengths. 
And Transformers are tasked with matching features whose sparsity deviates from the training distribution, which inherently embodies its length generalization capability.
In the field of natural language processing (NLP), the length generalization of Transformers primarily refers to length extrapolation~\cite{pressTrainShortTest}. 
Developing better position encoding is an effective approach to enhancing the length extrapolation~\cite{neishiRelationPositionInformation2019, ruossRandomizedPositionalEncodings2023}. 
It has been found that sinusoidal absolute positional encoding struggles with extrapolation~\cite{daiTransformerXLAttentiveLanguage2019}, while relative positional encoding exhibits greater robustness to changes in input length\cite{likhomanenkoCapeEncodingRelative2021, chiKerpleKernelizedRelative2022}. 
RoPE~\cite{suRoformerEnhancedTransformer2024} was proposed to multiply keys and queries by rotation matrices. 
Despite the absolute nature of this rotary process, the attention mechanism of RoPE depends solely on relative distance, which benefits length extrapolation. 
In specific tasks, Transformers have been proven to achieve length generalization~\cite{xiaoTheoryLengthGeneralization2024}. 
Anil et al.~\cite{anilExploringLengthGeneralization2022} found that fine-tuning regimes, scaling data, model sizes, and computational resources do not improve length generalization, whereas methods like chain-of-thought~\cite{weiChainofthoughtPromptingElicits2022} in the in-context learning regime do. 
LightGlue~\cite{lindenbergerLightglueLocalFeature2023a} adopted RoPE from NLP, but its extrapolation performance on dense image features~\cite{gaoSARTargetIncremental2024,kongFewShotClassIncrementalSAR2025} remains underexplored.
We further explore length generalization in the domain of image matching for sparsity transfer.
\section{Methodology}
We focus on image matchers using Transformers and investigate the feature matching process from a sequential perspective.
We then derive probabilistic reweighted attention and matching functions from a detector-based perspective and prove their asymptotic equivalence to the original functions.
The probabilistic derivation of the reweighted Transformer is illustrated in \cref{fig:overview}.
Based on the reweighting method, we transfer existing matchers between sparse and dense paradigms.

\subsection{Preliminaries}\label{sec:preliminaries}
Both sparse and dense image matching use image features, where a backbone is employed to compute feature map $\mathbf{D}$. 
In sparse matching, feature points are detected and sampled from the feature map before matching. 
On the other hand, dense matching directly computes correspondences using the entire feature map.

Given an image $I$, we regard the keypoint detection process as a procedure of independent and identically distributed (i.i.d.) sampling.  
A set of keypoints can be represented as a random sequence $S_{I}$ composed of random pixel coordinates $x_1, x_2, ..., x_n$, i.e., $S_{I} = \{x_n\}$, where $n$ is the number of keypoints. 
Each coordinate $x$ is a random variable distributed across the entire image, with a marginal probability distribution denoted as $P(x|I, \theta_F)$, where $\theta_F$ represents the parameters of the feature detector.
For the feature map $\mathbf{D}$ of image $I$, the feature or descriptor at $x$ is denoted as $\mathbf{D}(x)$. 
The sequence $S_{I}$ thus yields a sequence of feature points $F_{I}=\{(x_1, \mathbf{D}(x_1)), (x_2, \mathbf{D}(x_2)),...,(x_n, \mathbf{D}(x_n))\}$. 
We denote $F_I^*$ as a sequence of all possible feature points form $\mathbf{D}$.
Let the $i$-th point in $F_I$ be $F_{I,i}$.
For each index $i$, there is a corresponding index $i^*$ in $F_I^*$ such that $F_{I,i} = F_{I,i^*}^*$.

Notably, i.i.d. sampling may generate duplicate feature points. 
For short sequences where the repetition probability remains low, this sampling behavior approximates the non-repetitive sampling strategies typically adopted in keypoint detection.
As sequence length extends sufficiently, the repetition frequency of features naturally reflects their probability importance.
Our reweighting method exploits this importance measure for sparsity transfer.

For images $A$ and $B$, sparse matching establishes correspondences between a limited number of feature points $F_A$ and $F_B$, while dense matching involves matching $F_A^*$ and $F_B^*$ without probabilistic sampling.
In both sparse and dense matching, the features are first aggregated utilizing Transformers, followed by the computation of assignment matrices through matching functions.
Due to differences in sequence lengths, sparse and dense matching Transformers are typically isolated, with distinct sets of parameters.
In this paper, we propose deriving sparse and dense matchers without changing Transformer parameters, based on a reweighting approach.
We denote $\mathrm{TF_{sparse}}$ and $\mathrm{TF_{dense}}$ as sparse and dense matching mode of the Transformer, respectively, and the aggregated feature points are expressed as 
\begin{equation}\label{eqn:feature_update}
  \begin{split}
    &  (F_A', F_B') = \mathrm{TF_{sparse}}(F_A, F_B) \\
    &  (F_A^{*'}, F_B^{*'}) = \mathrm{TF_{dense}}(F_A^*, F_B^*),
  \end{split}
\end{equation}

where the order of feature points in each sequence is maintained. 
For each index $i$, $F_{A,i}'$ is the update of $F_{A,i}$, whose descriptor is replaced by corresponding token of the network's last layer. 
The same applies to $F_B$, $F_A^*$ and $F_B^*$.

For matching Transformers trained on short sequences $F_A, F_B$, directly applying them to $F_A^*, F_B^*$ may lead to generalization challenges. 
As an interpretation based on random sequences, we will demonstrate in \cref{sec:rw_attention} that matching $F_A^*, F_B^*$ is equivalent to matching sufficiently long random sequences $F_A^{U}$ and $F_B^{U}$, where both sequences are sampled i.i.d. from a uniform distribution. 
Two key differences exist between $F_A, F_B$ and $F_A^{U}, F_B^{U}$: (1) the marginal distributions mismatch, and (2) the sequence lengths are substantially different. 
These differences introduce both distributional shift and length generalization issues.
We focus on eliminating the marginal distribution differences by matching $F_A^{P}$ and $F_B^{P}$, where $F_A^{P}$ and $F_B^{P}$ are sufficiently long sequences sampled i.i.d. from $P(x|A,\theta_F)$ and $P(x|B,\theta_F)$, respectively. 
The sequences $F_A, F_B$ and $F_A^{P}, F_B^{P}$ share identical marginal distributions, differing only in sequence length.
And we hypothesize that matching $F_A^{P}$ and $F_B^{P}$ can achieve better generalization than matching $F_A^{U}$ and $F_B^{U}$. 
In \cref{sec:rw_attention,sec:rw_matching_function}, we propose an equivalent reweighting algorithm that is deterministic and practically executable for matching $F_A^{P}$ and $F_B^{P}$.

\begin{figure*}[!t]
  \centering
  \includegraphics[width=0.85\linewidth]{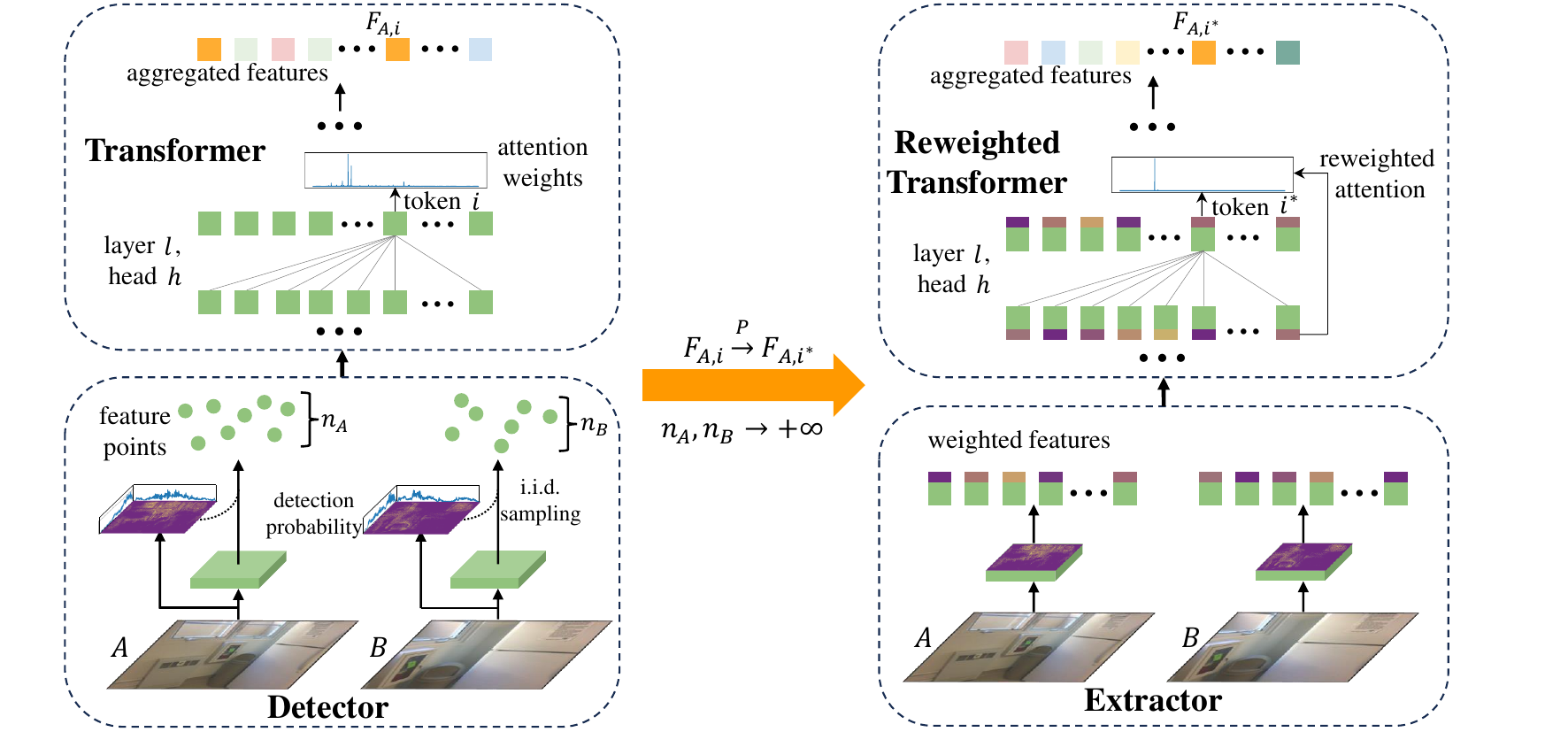}%
  \caption{Probabilistic derivation of reweighted Transformer.
  The left panel depicts the conventional detector-based feature matching, where the detector extracts features and computes detection probabilities. 
  A Transformer is utilized to aggregate the keypoints sampled by the detector. 
  The right panel illustrates the reweighted attention network that employs features with probability weights and adjusts attention weights based on the probabilities.
  The reweighted matching network constitutes the asymptotic limit of conventional matching network: for i.i.d. keypoints, as the sample size approaches infinity, the features aggregated by the original network converge in probability to those aggregated by the reweighted network.
  }
  \label{fig:overview}
\end{figure*}

\subsection{Reweighted Attention}\label{sec:rw_attention}
Matching Transformers aggregate features using an embedding layer and multiple attention layers.
Given feature points $F_A$ and $F_B$ with cardinalities of $n_A$ and $n_B$, the embedding layer maps feature points to tokens, $\mathbf{X}_A \in \mathbb{R}^{d \times n_A}$, $\mathbf{X}_B \in \mathbb{R}^{d \times n_B}$, where $d$ is the number of feature channels.
For each attention layer, the input tokens are denoted as $\mathbf{X}_Q$, $\mathbf{X}_K$ and $\mathbf{X}_V$. 
The output token $\mathbf{X}_Q'$ is computed as follows:
\begin{equation}\label{eqn:attention}
  \begin{split}
  &  \mathbf{X}'= \mathbf{X}_Q + \sum_{h = 1}^{H} \mathbf{W}_{VO}^{h} \mathbf{X}_V \mathrm{Sim}(\mathbf{W}_K^{h}\mathbf{X}_K, \mathbf{W}_Q^{h}\mathbf{X}_Q) \\
  & \mathbf{X}_Q' = \mathbf{X}' + \mathbf{W}_2 \mathrm{Act} (\mathbf{W}_1 \mathbf{X}' + \mathbf{b}_1 \mathbf{1}^\top) + \mathbf{b}_2 \mathbf{1}^\top
  \end{split}
\end{equation}
where $H$ is the number of attention heads, and $\mathbf{W}_{VO}^{h}$, $\mathbf{W}_K^{h}$, and $\mathbf{W}_Q^{h}$ are parameter matrices of the $h$-th attention head. 
In feed forward part, $\mathbf{W}_1, \mathbf{W}_2$ and $\mathbf{b}_1, \mathbf{b}_2$ are parameter matrices and vectors respectively, and all ones vector $\mathbf{1}$ broadcasts parameter vectors to match the shape of tokens.
The attention matrix is obtained by calculating the query and key matrices with the function $\mathrm{Sim}$.
The activation function $\mathrm{Act}$ can be any continuous element-wise function.  
For brevity, we omitted layer normalization (LayerNorm). 
Nevertheless, our subsequent theoretical analysis remains applicable to cases involving LayerNorm, as token-wise functions do not affect the derivation.

\begin{defi}
Let $\delta$ be a similarity function that maps $\mathbb{R}^d \times \mathbb{R}^d$ into $\mathbb{R}$. 
Given key and query matrices $\mathbf{K} \in \mathbb{R}^{d \times n_K} , \mathbf{Q} \in \mathbb{R}^{d \times n_Q}$, denote their column vectors as $\mathbf{K}_i$ and $\mathbf{Q}_j$.
Then the attention matrix $\mathrm{Sim}(\mathbf{K},\mathbf{Q}) \in \mathbb{R}^{n_K \times n_Q}$ is defined such that
\begin{equation}
    \mathrm{Sim}_{i,j}(\mathbf{K},\mathbf{Q}) = \frac{{\delta({\mathbf{K}_i},{\mathbf{Q}_j})}}{{\sum\limits_{k = 1}^{{n_K}} {\delta({\mathbf{K}_k},{\mathbf{Q}_j})} }}, \\
\end{equation}
where $\mathrm{Sim}_{i,j}$ is the element in its $i$-th row and $j$-th column, $1 \le i \le {n_K},1 \le j \le {n_Q}$.
\end{defi}

In standard attention where Softmax is employed, the similarity function is
\begin{equation}
  \delta_{\mathrm{softmax}}({\mathbf{K}_i},{\mathbf{Q}_j}) = \exp (\tau^{-1}{\mathbf{K}_i}^\top{\mathbf{Q}_j}),
\end{equation}
where $\tau$ is temperature parameter. In linear attention, the similarity function is 
\begin{equation}
  \delta_{\mathrm{linear}}({\mathbf{K}_i},{\mathbf{Q}_j}) = \phi({\mathbf{K}_i})^\top \phi({\mathbf{Q}_j}),
\end{equation}
where $\phi$ is a continuous element-wise function. 

\begin{defi}
Given key and query matrices $\mathbf{K}$ and $\mathbf{Q}$, $\mathbf{K} \in \mathbb{R}^{d \times n_K}, \mathbf{Q} \in \mathbb{R}^{d \times n_Q}$. 
The probability weights associated with each column of the key matrix are denoted as $p_I(i)$, $p_I(i) = P(x_i|I,\theta_F), 1 \leq i \leq n_K$, where $I$ is the image corresponding to the key matrix.
Then the reweighted attention matrix $\mathrm{Sim}^p(\mathbf{K},\mathbf{Q}) \in \mathbb{R}^{n_K \times n_Q}$ is defined such that
\begin{equation}
  \mathrm{Sim}_{i,j}^{p_I}(\mathbf{K},\mathbf{Q}) = \frac{p_I(i){\delta({\mathbf{K}_i},{\mathbf{Q}_j})}}{{\sum\limits_{k = 1}^{{n_K}} p_I(k){\delta({\mathbf{K}_k},{\mathbf{Q}_j})} }}. 
  \label{eq:reweighted_attention}
\end{equation}
\end{defi}

This formula can be intuitively interpreted as assigning higher attention scores to tokens with greater probability weights in the reweighted attention mechanism, which helps to retain the knowledge about probabilistic importance of the features.
The following theorem demonstrates that the reweighted attention layer is the asymptotic limit of original attention layer.

\begin{theorem} \label{thm:reweighted_attention_layer}
  Given tokens $\mathbf{Q}^*$, $\mathbf{K}^*$ and $\mathbf{V}^*$, and an attention layer described in \cref{eqn:attention},
  $\mathbf{Q}^* \in \mathbb{R}^{d \times n^*}, \mathbf{K}^* \in \mathbb{R}^{d \times m^*}, \mathbf{V}^* \in \mathbb{R}^{d \times m^*}$.
  Denote $ \{ 1,\ldots , n \}$ as $[n]$.
  For $m,n \in \mathbb{N}^+ $, which approach infinity, suppose there are sequences of random tokens $\mathbf{Q}^{(n)}$, $\mathbf{K}^{(m)}$ and $\mathbf{V}^{(m)}$ that satisfy:

  (a) The sequences extend with $m,n$: 
  \begin{equation}
      \label{eqn:condition_shape}
       \mathbf{Q}^{(n)} \in \mathbb{R}^{d \times n}, \mathbf{K}^{(m)}, \mathbf{V}^{(m)} \in \mathbb{R}^{d \times m}.
  \end{equation}

  (b) The $j$-th token of $\mathbf{Q}^{(n)}$ converges in probability to $\mathbf{Q}_{j^*}^*$:
  \begin{equation}
      \label{eqn:condition_limQ}
      \forall j \in \mathbb{N}^+, \exists j^* \in [n^*]: \mathbf{Q}_{j}^{(n)} \xrightarrow{P} \mathbf{Q}_{j^*}^*.
  \end{equation}

  (c) The $i$-th token of $\mathbf{K}^{(m)}$ and the $i$-th token of $\mathbf{V}^{(m)}$ converge in probability to $\mathbf{K}_{i^*}^*$ and $ \mathbf{V}_{i^*}^*$, respectively:
  \begin{equation}
      \label{eqn:condition_limKV}
       \begin{split}
          &  \forall i \in \mathbb{N}^+, \exists i^* \in [m^*]: \\
          & \mathbf{K}_{i}^{(m)} \xrightarrow{P} \mathbf{K}_{i^*}^*, \mathbf{V}_{i}^{(m)} \xrightarrow{P} \mathbf{V}_{i^*}^*.
       \end{split}
  \end{equation}

  (d) Tokens that converge in probability to the same token are equal, and their proportion in the sequence converges in probability. Formally, for index sets $\mathcal{I}_m(i^*) = \{k | k \in [m], k^* = i^* \}$, the tokens satisfy:
  \begin{equation}
      \label{eqn:ratio_converges_to_probability}
      \begin{split}
      & \forall i^* \in [m^*], \forall i, i' \in \mathcal{I}_m(i^*): \\
      & \mathbf{K}_{i}^{(m)} = \mathbf{K}_{i'}^{(m)}, \mathbf{V}_{i}^{(m)} = \mathbf{V}_{i'}^{(m)}, \\
      & \frac{1}{m} \left\lvert \mathcal{I}_m(i^*) \right\rvert \xrightarrow{P} p(i^*), 
      \end{split}
  \end{equation}
  where $p(i^*)$ is the probability associated with $\mathbf{K}_{i^*}^*$ and $\mathbf{V}_{i^*}^*$.

  Let $\mathbf{Q}^{'(m,n)}$ be the updated token of the attention layer, with $\mathbf{Q}^{(n)}$, $\mathbf{K}^{(m)}$ and $\mathbf{V}^{(m)}$ as input.
  And let $\mathbf{Q}^{'*}$ represent the updated token of the reweighted attention layer, which replaces $\mathrm{Sim}$ with $\mathrm{Sim}^p$ in \cref{eq:reweighted_attention}, and takes $\mathbf{Q}^*$, $\mathbf{K}^*$ and $\mathbf{V}^*$ as input.
  Then for any $j \in \mathbb{N}^+ $, as $m, n$ approach infinity, we have
  \begin{equation}
      \mathbf{Q}_{j}^{'(m,n)} \stackrel{P}{\longrightarrow} \mathbf{Q}^{'*}_{j^*}.
  \end{equation}

\end{theorem}
Condition (d) of the theorem essentially represents the law of large numbers, and the i.i.d. sampling process of the feature points extraction satisfies this condition.

For image $A$, $B$ and probability weights $p_A$, $p_B$, the reweighted Transformer $\mathrm{TF}^{p_A, p_B}$ of network $\mathrm{TF}$ is defined as follows: while keeping the parameters unchanged, the attention function of each attention layer is replaced with the reweighted version. 
In each reweighted attention, the weights $p_A$ or $p_B$ are used based on whether the token $\mathbf{X}_K$ corresponds to the image $A$ or $B$, respectively. 
We introduce the following theorem that reveals the asymptotic equivalence of the reweighted network that matches $F_A^*, F_B^*$ and the original network that matches $F_A^{P}, F_B^{P}$:

\begin{theorem}\label{thm:rw_attention}
Let $F_A$ and $F_B$ be two sequences of i.i.d. feature points, which are sampled from $F_A^*$ and $F_B^*$ according to detection probabilities $p_A$ and $p_B$, respectively.
Let $\mathrm{TF}$ and $\mathrm{TF}^{p_A, p_B}$ be an attention network and its reweighted version, as described above. 
The feature points are aggregated as described in \cref{eqn:feature_update}, where $\mathrm{TF_{sparse}} = \mathrm{TF}$ and $\mathrm{TF_{dense}} = \mathrm{TF}^{p_A, p_B}$.
For any index pair $i$ and $j$, denote $i$-th and $j$-th output feature points from network $\mathrm{TF}$ as $\mathrm{TF}_{i,j}$, i.e., $\mathrm{TF}_{i,j}(F_A, F_B) = (F'_{A,i}, F'_{B,j})$.
As the lengths of $F_A$ and $F_B$ tend to infinity, we have
\begin{equation}
 \mathrm{TF}_{i,j}(F_A, F_B) \stackrel{P}{\longrightarrow} \mathrm{TF}_{i^*,j^*}^{p_A, p_B}(F_A^*, F_B^*).
\end{equation}
\end{theorem}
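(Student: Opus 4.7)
The plan is to prove the convergence by induction on the layers of the Transformer, with the base case being the embedding layer and the inductive step being one attention block followed by the feed-forward sub-layer. Throughout I would work with fixed target indices $i^*, j^*$ having strictly positive detection probabilities (zero-probability points never appear in the sample, so the claim is vacuous there); the corresponding sample indices $i$ and $j$ with $F_{A,i}=F^*_{A,i^*}$ and $F_{B,j}=F^*_{B,j^*}$ then exist with probability tending to one as $n_A, n_B \to \infty$.

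The key analytic ingredient is a single-layer reweighting lemma. For a fixed query $\mathbf{Q}_j$, the attention contribution to the $j$-th aggregated value is $\sum_k \mathbf{V}_k \delta(\mathbf{K}_k, \mathbf{Q}_j)/\sum_k \delta(\mathbf{K}_k, \mathbf{Q}_j)$. Dividing numerator and denominator by $n_K$ and applying the weak law of large numbers to the i.i.d. sample drawn with distribution $p_I$ on the finite set $F^*_I$ gives
\begin{equation}
\frac{1}{n_K}\sum_{k=1}^{n_K} \delta(\mathbf{K}_k,\mathbf{Q}_j) \xrightarrow{P} \sum_{k\in F^*_I} p_I(k)\,\delta(\mathbf{K}^*_k,\mathbf{Q}_j),
\end{equation}
and an analogous statement for the value-weighted numerator. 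Boundedness of $\delta$ on the finite range and the continuous mapping theorem (the limit denominator is strictly positive for both softmax and linear similarity) identify the limit ratio with the reweighted aggregation $\sum_{k\in F^*_I} \mathbf{V}^*_k \mathrm{Sim}^{p_I}_{k,j}(\mathbf{K}^*,\mathbf{Q})$. This is the deterministic quantity produced by the reweighted attention at index $j$.

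Propagation through the full Transformer then proceeds by induction. The base case is immediate: the embedding map acts element-wise on feature points, so the token at sample position $i$ equals the reweighted token at $i^*$ exactly. For the inductive step, the linear projections, the residual connection, and the element-wise continuous $\mathrm{Act}$ all preserve in-probability convergence by the continuous mapping theorem. The attention sub-layer combines the inductive hypothesis on individual tokens with the single-layer lemma applied to the normalization sum. Cross-attention is handled identically after swapping which sample size drives the LLN, and the joint limit $n_A, n_B \to \infty$ is unproblematic because the two samples are drawn independently.

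The main obstacle I anticipate is the dependency structure at layers beyond the first, where the keys $\mathbf{K}_k$ are no longer i.i.d.\ in the sample but become functions of the entire sample through prior attention blocks. The clean way to handle this is to reorganize the summation by support: writing $(1/n_K)\sum_k \delta(\mathbf{K}_k,\mathbf{Q}_j) = \sum_{k^*\in F^*_I}(N_{k^*}/n_K)\,\overline{\delta}_{k^*}$, where $N_{k^*}$ counts sample positions mapping to $k^*$ and $\overline{\delta}_{k^*}$ is their average contribution. The multinomial LLN gives $N_{k^*}/n_K \xrightarrow{P} p_I(k^*)$, while the inductive hypothesis combined with continuity of $\delta$ yields $\overline{\delta}_{k^*} \xrightarrow{P} \delta(\mathbf{K}^*_{k^*},\mathbf{Q}^*_j)$. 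Because $F^*_I$ is finite, Slutsky's theorem then closes the step without further measure-theoretic overhead. The remaining care is purely notational: tracking which image each key/query belongs to so that the correct weights $p_A$ or $p_B$ are inserted when the attention is self versus cross.
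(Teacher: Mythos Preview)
Your proposal is correct and follows essentially the same route as the paper: an induction on the depth of the Transformer, with the inductive step resting on a single-layer reweighting lemma obtained by grouping the attention sum by support index, invoking the multinomial LLN for the empirical frequencies $N_{k^*}/n_K$, and the continuous mapping theorem for the feed-forward part. One small simplification you may have missed and which the paper exploits explicitly (their condition (d) in the single-layer theorem) is that tokens at sample positions sharing the same support index $k^*$ are not merely converging to the same limit but are \emph{exactly equal} at every layer, by permutation symmetry of attention; this makes your $\overline{\delta}_{k^*}$ a single deterministic value rather than a genuine random average, so no extra argument about averaging a growing random number of dependent terms is needed.
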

The theorem states that as the number of sampled feature points approaches infinity, the outputs from directly aggregation of these feature points via an attention network converge in probability to those from processing the deduplicated features with the reweighted network. 
Therefore, our reweighting method transforms the problem of matching infinite random sequences into a practically feasible deterministic matching.
When $p_A$ and $p_B$ are uniform distributions, we have $\mathrm{TF}^{p_A, p_B} = \mathrm{TF}$, which demonstrates that matching $F_A^*$ and $F_B^*$ is equivalent to matching $F_A^{U}$ and $F_B^U$ in~\cref{sec:preliminaries}.

\subsection{Reweighted Matching Function}\label{sec:rw_matching_function}
A matching function computes an assignment matrix $\mathbf{P}$ of all possible matches.
Frequently employed matching layers leverage features from multi-layer attention to compute $\mathbf{P}$, whereas in reweighted matching layers, the functions incorporate the probabilities of these features as well.
We introduce two types of reweighted matching functions with optimal transport and dual-softmax.
Then we demonstrate their asymptotic equivalence to original functions through a probabilistic convergence theorem.

For aggregated feature points $F_A^{'}$ and $F_B^{'}$, denote the $i$-th and $j$-th feature vectors as $\mathbf{f}_{A,i}$ and $\mathbf{f}_{B,j}$, then a score matrix $\mathbf{S}$ is defined as pairwise inner product of the feature vectors:
\begin{equation}
  \mathbf{S}_{i, j} = \mathbf{f}_{A,i}^\top \mathbf{f}_{B,j}.
\end{equation}
For matching function with optimal transport, the score matrix $\mathbf{S}$ is augmented to $\overline{\mathbf{S}}$ by appending a new row and column filled with a dustbin score, as in~\cite{sarlinSuperGlueLearningFeature2020a}.
The probabilistic weight vectors of $A$ and $B$ are denoted as $\mathbf{a}$ and $\mathbf{b}$, $\mathbf{a} \in \mathbb{R}^{n_A + 1}, \mathbf{b} \in \mathbb{R}^{n_B + 1}$ where the last component of each vector represents the dustbin:
\begin{equation}
  \begin{split}
   & \mathbf{a}_i = p_A(i), \mathbf{a}_{n_A + 1} = 1, 1 \leq i \leq n_A \\
   & \mathbf{b}_j = p_B(j), \mathbf{b}_{n_B + 1} = 1, 1 \leq j \leq n_B.    
  \end{split}
\end{equation}
The reweighted optimal transport layer finds an augmented assignment $\overline{\mathbf{P}}$ to maximize the total score $\sum_{i,j} \overline{\mathbf{S}}_{i,j} \overline{\mathbf{P}}_{i,j}$ with the following constraints:
\begin{equation}\label{eqn:ot_matching}
  \overline{\mathbf{P}} \mathbf{1}_{n_B+1} = \mathbf{a} \quad and \quad \overline{\mathbf{P}}^\top \mathbf{1}_{n_A+1} = \mathbf{b}.
\end{equation}
In practice, the optimal transport layer utilizes the Sinkhorn algorithm~\cite{sinkhornConcerningNonnegativeMatrices1967,cuturiSinkhornDistancesLightspeed2013} to compute soft assignments. 
We define $\mathrm{OT}^{p_A, p_B}(F_A', F_B')$ as the assignment computed by the layer, where $F_A'$ and $ F_B'$ are the features and $ p_A$ and $ p_B$ are the probabilities associated with those features, respectively.
When $p_A$ and $p_B$ are uniform distributions, the output assignment is denoted as $\mathrm{OT}(F_A', F_B')$, which differs from the assignment of matching layer in~\cite{sarlinSuperGlueLearningFeature2020a} by a constant factor.

For matching function with dual-softmax, denote $z_{i,j}$ as $\exp (\mathbf{S}_{i,j})$. 
The assignment $\mathrm{DS}^{p_A, p_B}(F_A', F_B')$ of reweighted dual-softmax is defined as:
\begin{equation}\label{eqn:ds_matching}
  \mathrm{DS}_{i,j} (\mathbf{S}) = \frac{p_A(i) p_B(j) z_{i,j}^2}{{\sum\limits_{k = 1}^{{n_A}} p_A(k){z_{i,k}} \sum\limits_{l = 1}^{{n_B}} p_B(l){z_{l,j}}}} . 
\end{equation}

\begin{theorem}\label{thm:rw_matching}
Let $F_A'$ and $F_B'$ be two sequences of i.i.d. feature points which are sampled from $F_A^{*'}$ and $F_B^{*'}$ with detection probabilities $p_A$ and $p_B$, respectively.
For any index pair $i^*$ and $j^*$ of $F_A^{*'}$ and $F_B^{*'}$, as the lengths of $F_A'$ and $F_B'$ tend to infinity, we have
\begin{equation}
  \sum_{\substack{k,l: \\ k^*=i^*, l^*=j^*}}\mathrm{OT}_{k,l}(F_A', F_B') \stackrel{P}{\longrightarrow} \mathrm{OT}_{i^*,j^*}^{p_A, p_B}(F_A^{*'}, F_B^{*'}).
\end{equation}

\begin{equation}
  \sum_{\substack{k,l: \\ k^*=i^*, l^*=j^*}}\mathrm{DS}_{k,l}(F_A', F_B') \stackrel{P}{\longrightarrow} \mathrm{DS}_{i^*,j^*}^{p_A, p_B}(F_A^{*'}, F_B^{*'}).
\end{equation}
\end{theorem}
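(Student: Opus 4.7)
\textit{Common setup.} The $n_A$ i.i.d.\ samples in $F_A'$ (drawn from $p_A$) and $n_B$ i.i.d.\ samples in $F_B'$ (drawn from $p_B$) induce multinomial counts $N_A(i^*) = \#\{k : k^* = i^*\}$ and $N_B(j^*) = \#\{l : l^* = j^*\}$. The weak law of large numbers gives $N_A(i^*)/n_A \stackrel{P}{\longrightarrow} p_A(i^*)$ and $N_B(j^*)/n_B \stackrel{P}{\longrightarrow} p_B(j^*)$ for each fixed $i^*, j^*$. Crucially, every score entry depends only on the underlying indices, $z_{k,l} = z^*_{k^*, l^*}$, so sample-level sums regroup as count-weighted sums over the full support. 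This observation drives both parts of the proof.

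\textit{Dual-softmax case.} Each DS entry in the group $\{k^* = i^*, l^* = j^*\}$ takes the same value, yielding
\[
\sum_{\substack{k:\,k^* = i^*\\ l:\,l^* = j^*}} \mathrm{DS}_{k,l}(F_A', F_B') \;=\; \frac{N_A(i^*)\,N_B(j^*)\,(z^*_{i^*, j^*})^2}{\bigl(\sum_{m^*} N_B(m^*)\,z^*_{i^*, m^*}\bigr)\bigl(\sum_{n^*} N_A(n^*)\,z^*_{n^*, j^*}\bigr)}.
\]
Dividing numerator and denominator by $n_A n_B$ and applying the continuous mapping theorem to the joint limit of the empirical frequencies yields exactly $\mathrm{DS}^{p_A, p_B}_{i^*, j^*}(F_A^{*'}, F_B^{*'})$. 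This step is essentially an LLN plus continuous mapping argument and presents no substantive difficulty.

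\textit{Optimal transport case.} Let $\overline{\mathbf{P}}$ be the Sinkhorn solution of the sample OT with uniform feature marginals $1/n_A$, $1/n_B$ and dustbin marginal $1$. By the Sinkhorn factorization $\overline{P}_{k,l} = u_k v_l \exp(\overline{S}_{k,l})$ for scaling vectors $u, v$. Because $\overline{S}_{k,l}$ and the marginals are invariant under swapping two samples that share the same underlying index, uniqueness of the factorization (modulo the standard $u \mapsto cu$, $v \mapsto v/c$ symmetry) forces $u_k = \hat u_{k^*}$ and $v_l = \hat v_{l^*}$. Define $\tilde P_{i^*, j^*} := \sum_{k^*=i^*,\, l^*=j^*} \overline{P}_{k,l}$; a direct computation gives $\tilde P_{i^*, j^*} = \tilde u_{i^*}\, \tilde v_{j^*}\, \exp(\overline{S}^*_{i^*, j^*})$ with $\tilde u_{i^*} = N_A(i^*)\,\hat u_{i^*}$ and $\tilde v_{j^*} = N_B(j^*)\,\hat v_{j^*}$. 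The row and column marginals of $\tilde P$ on the feature block are $N_A(i^*)/n_A$ and $N_B(j^*)/n_B$, while the dustbin marginal $1$ is unchanged. Hence $\tilde P$ is precisely the Sinkhorn solution of the reweighted OT with \emph{empirical} marginals $(N_A/n_A, N_B/n_B)$ in place of $(p_A, p_B)$.

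\textit{Passing to the limit.} The remaining step, and the main technical obstacle, is continuity of the Sinkhorn map with respect to its input marginals. I would dispatch it via standard stability results for entropic OT: the optimal dual potentials depend continuously on strictly positive marginals (the dual objective is smooth and strongly concave on the quotient by the additive shift symmetry), or equivalently the Sinkhorn iteration is a contraction in the Hilbert projective metric whose rate is continuous in the inputs. Combined with $N_A/n_A \stackrel{P}{\longrightarrow} p_A$, $N_B/n_B \stackrel{P}{\longrightarrow} p_B$ and the continuous mapping theorem, this delivers $\tilde P \stackrel{P}{\longrightarrow} \mathrm{OT}^{p_A, p_B}(F_A^{*'}, F_B^{*'})$ entrywise, completing the proof.
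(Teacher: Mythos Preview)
Your proposal is correct, and the dual--softmax half is essentially what the paper has in mind (it only remarks that ``analogous techniques'' apply). The optimal--transport half, however, follows a genuinely different route from the paper's own proof.

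The paper argues \emph{iteration by iteration}: it fixes the Sinkhorn recursion $u'_i=a_i/(Kv)_i$, $v'_j=b_j/(K^\top u')_j$, initializes both the sampled and reweighted problems compatibly, and proves by induction on the iteration index that $n_A u_i \xrightarrow{P} p_A(i^*)^{-1} u^*_{i^*}$ (and the analogous relation for $v$, plus the dustbin entries). Each inductive step is a direct LLN computation after regrouping the sum $(Kv)_i$ by class; summing $u_i K_{i,j} v_j$ over the class then gives the theorem. You instead perform a \emph{structural reduction}: using permutation invariance and uniqueness of the Sinkhorn factorization to force $u_k=\hat u_{k^*}$, you collapse the $n_A\times n_B$ sample OT exactly into the $n_A^*\times n_B^*$ reweighted OT with \emph{empirical} marginals $N_A/n_A,\,N_B/n_B$, and then invoke continuity of the entropic OT solution in its marginals together with $N/n\xrightarrow{P}p$.

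What each buys: the paper's induction is completely elementary and, importantly, matches the object actually computed in SuperGlue-style matchers---a \emph{fixed finite number} of Sinkhorn steps---without appealing to any external stability theorem. Your approach is cleaner conceptually because it separates a deterministic algebraic identity (sample OT $=$ reduced OT with empirical marginals) from a single probabilistic limit; it also transfers immediately to any matching layer that is a continuous function of the marginals. One caveat: your uniqueness argument, as written, pins down $u_k=\hat u_{k^*}$ for the \emph{converged} Sinkhorn coupling. If $\mathrm{OT}$ denotes the output after finitely many iterations (as in practice), the same class-constancy still holds, but the justification is that each Sinkhorn update preserves class-constancy given a class-constant initialization---which is exactly the computation the paper's induction performs. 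In that finite-iteration case the continuity you need is just that of a finite composition of smooth maps, so the Hilbert-metric machinery is unnecessary.
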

Consequently, the detector-based matching network, after reweighting its attention and matching functions based on detection probabilities, can generalize to dense features under aligned marginal distributions.
The proof of the theorems can be found in the Appendix.

\subsection{Sparsity Transfer}\label{sec:sparsity_transfer_method}
Utilizing the reweighting methods, existing matching networks can be transferred to sparsity levels distinct from the training sparsity levels.

\subsubsection{Transferring Detector-Based Matchers}
To enable reweighted dense matching using existing detector-based matchers, it is necessary to indentify feature points $F_I^*$ and their associated probability scores for reweighting. 

Most detectors apply non-maximum suppression (NMS) to the score maps.
We follow the NMS setting of existing detectors and consider features that retain positive scores after NMS as viable features.
Dense feature points are obtained by simply increasing the number of selected top-k features to match the count of features in the feature map.
For instance, given an image of $640\times480$ pixels with an 8-fold downsampled feature map, we construct $F_I^*$ with 4800 feature points.
If the feature map is excessively large, we use fewer features to stay within memory constraints, a practice termed semi-dense matching.

The probability scores are set to the values of score map at corresponding localizations of $F_I^*$.
These scores are utilized to reweight the matcher as described in \cref{sec:rw_attention}.
It is noteworthy that although the score maps of existing detectors are often regarded as probabilistic scores, these detectors are not trained with the probabilistic sampling outlined in our theoretical framework.
SuperPoint~\cite{detoneSuperPointSelfSupervisedInterest2018a} does not employ random sampling of feature points during training. 
Consequently, its score map lacks a strict probabilistic interpretation. 
DISK~\cite{tyszkiewiczDISKLearningLocal2020a} employs random sampling during training, but the distribution range of each sampled point is limited to predefined image patches rather than the entire image.
The analysis of their actual reweighted matching performance is presented in \cref{sec:relative_pose}.

\subsubsection{Transferring Detector-Free Matchers}
To enable detector-free matchers to perform sparse matching without compromising their dense matching capability, we introduce a straightforward sparse training pipeline based on the reweighting method. 
Additionally, a pruning and reweighting pipeline is employed to accelerate inference, as illustrated in \cref{fig:sparsity_transfer}. 

\begin{figure}[tb]
  \centering
  \includegraphics[width=\linewidth, trim={0.0cm, 1.0cm, 2.0cm, 0.5cm}]{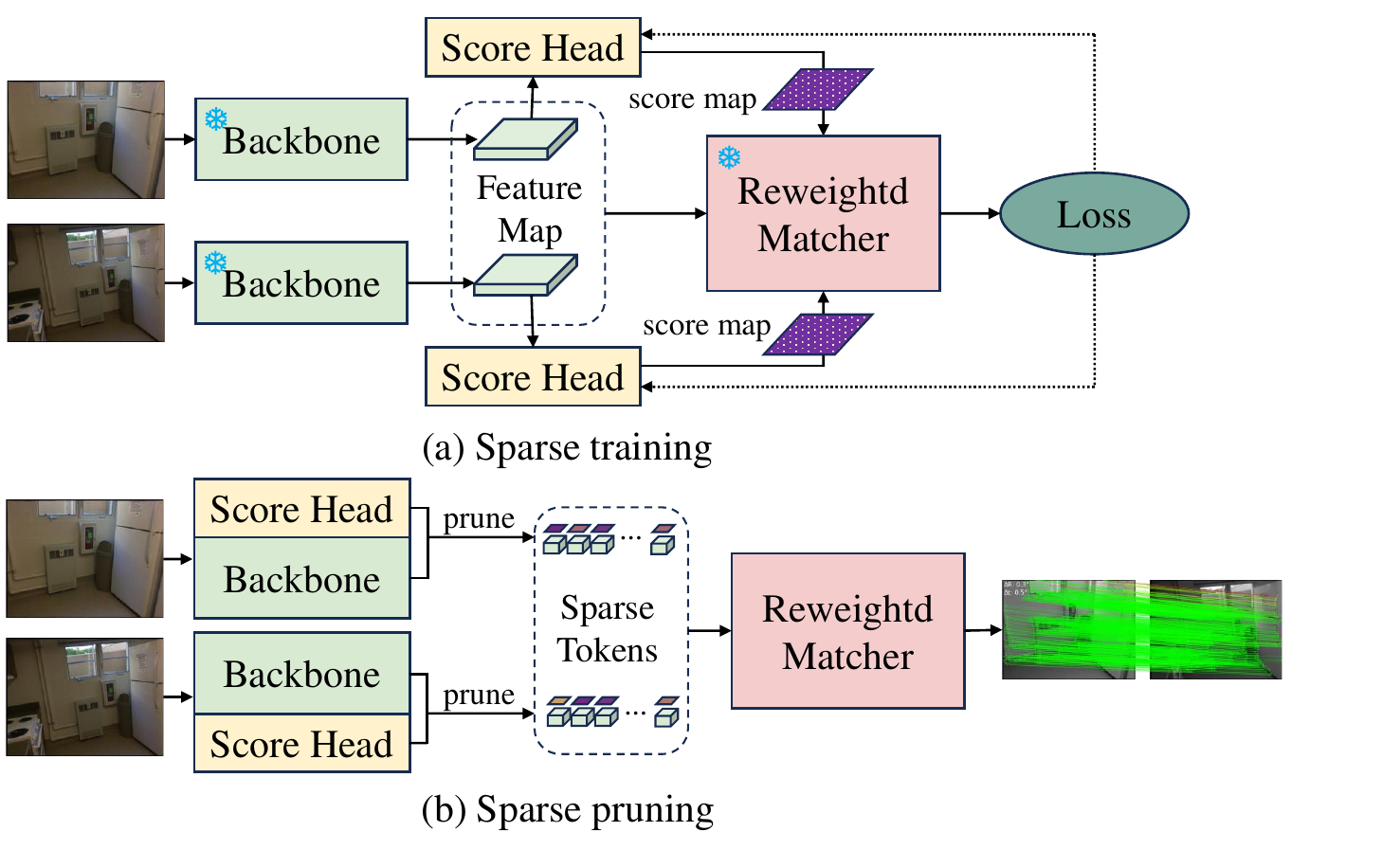}%
  \caption{Sparse training and pruning of detector-free matcher.
  Using reweighted attention and matching function, we employ:
  (a) Sparse training with frozen backbone and matcher.
  (b) Sparse pruning for efficient inference.
  }
  \label{fig:sparsity_transfer}%
\end{figure}

A shallow score head, which is composed of two convolutional layers, is introduced between the backbone and the attention-based matching network. 
The score head is trained while the parameters of the backbone and matching network are frozen to retain the dense matching capability. 
During training, all features participate in the matching process, enabling the score head to learn the global importance of each feature. 
During inference, we prune backbone features with scores below a predefined threshold and feed the remaining features as tokens into the matching network.

For each feature map, the score head computes a score map $\mathbf{S}$ assigning each feature a score between 0 and 1, representing the probability $P(x|I,\theta_F)$ in \cref{sec:preliminaries}. 
The matching network then performs reweighted attention and matching based on the feature map and the score map as in \cref{sec:rw_attention}. 
The original dense matching is equivalent to the reweighted matching when the score head outputs a constant value, in which case the equivalent distribution $P(x|I,\theta_F)$ is uniform distribution.
The objective of the training is to identify a sparse distribution that maximizes the matching performance under the reweighted framework. 
Therefore, the total loss consists of a sparsity loss $L_s$ and a matching loss $L_m$, i.e., $L = L_m + \lambda L_s$, where $\lambda$ is a hyperparameter controlling the level of sparsity.
We adopt $L_1$ loss as sparsity loss:
\begin{equation}
  L_s = \left\lVert \mathbf{S} \right\rVert _ 1,
\end{equation}
which progressively induces sparsity in $\mathbf{S}$ during the training process.
For sparse training of LoFTR, we use the coarse matching supervision as in~\cite{sunLoFTRDetectorFreeLocal2021}:

\begin{equation}
  L_m = -\frac{1}{| \mathcal{M}^{gt} |} \sum_{(i,j) \in \mathcal{M}^{gt}} \log \mathcal{P}(i,j).
\end{equation}

The matching loss computes negative log-likelihood over the ground truth matches $\mathcal{M}^{gt}$, and the matching score $\mathcal{P}$ is computed by the reweighted dual-softmax matching function as in~\ref{eqn:ds_matching}.
We separately train score heads for the indoor and outdoor versions of LoFTR, maintaining consistency with the training sets and parameters provided by the original authors. For each version of LoFTR, we train two distinct score heads with different levels of sparsity.

\section{Experiments}

We implement reweighted versions of mainstream Transformer-based matchers including SuperGlue, LightGlue, and LoFTR.
Their performances are evaluated at different levels of sparsity. 
The comparisons of the sparse and dense matching performance are conducted on both indoor and outdoor datasets.

\subsection{Relative Pose Estimation}\label{sec:relative_pose}

We evaluate the relative pose estimation capability of feature matchers at different levels of sparsity.
ScanNet~\cite{daiScanNetRichlyAnnotated3D2017} and MegaDepth~\cite{liMegadepthLearningSingleview2018} are utilized for indoor and outdoor assessment, respectively.

\subsubsection{Settings}
ScanNet comprises image pairs with wide baselines and texture-less regions. 
We use 1500 test image pairs from ScanNet as in~\cite{sarlinSuperGlueLearningFeature2020a}, with all images resized to a resolution of $640\times480$.

MegaDepth contains image pairs featuring drastic viewpoint changes and texture repetitions. 
We adhere to the testing procedure outlined in~\cite{lindenbergerLightglueLocalFeature2023a} and utilize 1500 test image pairs from St. Peters Square and Reichstag of MegaDepth for evaluation. 
For SuperGlue and LightGlue, the longest side of the input images is resized to 1600 pixels.
For LoFTR, the longest side is resized to 840 pixels.

To reweight SuperGlue and LightGlue, we use official models with their original parameters intact, reweighting only the attention scores based on the probabilistic scores of features at inference time, as in \cref{eq:reweighted_attention}. 
The reweighted SuperGlue and LightGlue utilize feature points and probabilistic scores from SuperPoint or DISK. 
Additionally, the pruning mechanisms of LightGlue are turned off for optimal accuracy.
To reweight LoFTR, we train multiple score heads on indoor/outdoor scenes with varying sparsity.
The backbone features of LoFTR are pruned based on the sparse probabilistic scores derived from the score heads.
The pruned features are then matched based on the probabilistic scores, and its attention scores are also reweighted as in \cref{eq:reweighted_attention}. 

The sparsity of matching is varied by adjusting the number of features used for matching. 
The sparsity of SuperGlue and LightGlue is represented by the number of feature points utilized, while the sparsity of reweighted LoFTR is determined by the quantity of features selected for sparse matching.

We benchmark sparse matching performance by comparing our reweighted LoFTR with established sparse matchers (SuperGlue and LightGlue).
On ScanNet, all baseline methods use 1024 features per image, which follows ~\cite{sarlinSuperGlueLearningFeature2020a}.
On MegaDepth, all baseline methods adhere to the configuration specified in ~\cite{lindenbergerLightglueLocalFeature2023a}, employing 2048 features per image.
And we evaluate reweighted LoFTR with sparsity levels comparable to baseline methods.

For evaluation of dense and semi-dense matching performance, we compare the performance of reweighted SuperGlue and LightGlue against their sparse baselines. 
On ScanNet, we consider dense matching on feature maps at 1/8 of the input resolution and extract 4800 feature points with the highest probabilistic scores as dense features. 
On MegaDepth, we extract 12288 feature points with the highest probabilistic scores as semi-dense features. 

The pose error is computed as the maximum angular error in rotation and translation and reported as Area Under the Curve (AUC) at thresholds of 5°, 10°, and 20°. 
To obtain camera poses, we solve the essential matrix with vanilla RANSAC and LO-RANSAC as in ~\cite{lindenbergerLightglueLocalFeature2023a}.  

\subsubsection{Results of Indoor Pose Estimation}
The results are shown in \cref{table:scannet}, where methods marked with (R) indicate those that are reweighted.
At similar sparsity levels, reweighted LoFTR surpasses the performance of LightGlue and attains results comparable to those of SuperGlue.
When utilizing only half of the features, reweighted LoFTR achieves performance comparable to that of LightGlue under both 10° and 20° conditions.
It is noteworthy that the matching component of LoFTR is trained on dense features, whose parameters remain unchanged during the sparse reweighting process. 
This indicates that for indoor data, training with dense features enables the matching networks to acquire prior knowledge of scene structures, which in turn assists in their generalization when applied to sparse features.
Although detector-based matchers exhibit inferior performance compared to LoFTR in dense matching, the reweighting method enables them to adapt to dense features and improves pose estimation accuracy. 
Among detector-based matchers, reweighted SuperGlue attains the highest performance when utilizing dense SuperPoint features.

\begin{table}
\caption{Evaluation for indoor pose estimation at different sparsity. Bold denotes optimal across all sparsity levels. Underline indicates best using features less than 4000.}
  \centering
\resizebox{\linewidth}{!}{

\begin{tabular}{lccccccc}
\toprule
{\multirow{2}{*}{Method}} & \multirow{2}{*}{Features} & \multicolumn{3}{c}{RANSAC AUC} & \multicolumn{3}{c}{LO-RANSAC AUC} \\
\cmidrule(lr){3-8}
                          &                           &                 \multicolumn{6}{c}{ 5° / 10° / 20°}            \\
\midrule
SP+SuperGlue              &   \multirow{3}{*}{1024}      & \multicolumn{3}{c}{15.60 / 32.73 / 51.19} & \multicolumn{3}{c}{20.03 / 39.01 / \underline{57.55}} \\
SP+LightGlue              &                            & \multicolumn{3}{c}{13.37 / 28.36 / 44.83} & \multicolumn{3}{c}{18.65 / 35.25 / 52.08} \\
DISK+LightGlue            &                            & \multicolumn{3}{c}{13.33 / 26.37 / 40.52} & \multicolumn{3}{c}{17.08 / 31.35 / 46.04} \\
\midrule
\multirow{2}{*}{LoFTR(R)} &            516               & \multicolumn{3}{c}{11.76 / 27.10 / 45.84} & \multicolumn{3}{c}{16.07 / 34.14 / 53.52} \\
                          &           1060               & \multicolumn{3}{c}{\underline{\textbf{16.79}} / \underline{33.61} / \underline{51.38}} & \multicolumn{3}{c}{\underline{\textbf{20.45}} / \underline{39.55} / 57.00} \\
\midrule
SP+SuperGlue(R)   & \multirow{3}{*}{4800}    & \multicolumn{3}{c}{15.82 / \textbf{34.09} / \textbf{54.32}} & \multicolumn{3}{c}{19.51 / \textbf{40.01} / \textbf{59.70}} \\
SP+LightGlue(R)   &                         & \multicolumn{3}{c}{13.94 / 28.80 / 44.81} & \multicolumn{3}{c}{18.79 / 36.11 / 52.55} \\
DISK+LightGlue(R) &                         & \multicolumn{3}{c}{15.54 / 30.35 / 44.64} & \multicolumn{3}{c}{19.02 / 35.36 / 50.77} \\
\bottomrule
\end{tabular}
}

\label{table:scannet}
\end{table}

\subsubsection{Results of Outdoor Pose Estimation}
The outdoor results are presented in \cref{table:megadepth}.
The reweighted LoFTR requires a greater number of features to achieve performance comparable to SuperPoint+LightGlue, attributed to its lower input resolution relative to detector-based methods. 
Nevertheless, with a similar quantity of features, the reweighted LoFTR matches the accuracy of DISK+LightGlue.
Unlike indoor datasets, the MegaDepth dataset is characterized by abundant repetitive textures, which can lead to confusion when numerous features are extracted.
The information gain provided by dense SuperPoint features is insufficient to counteract the impact of feature confusion and generalization issues, thereby degrading the matching performance.
DISK mitigates the feature confusion through a deeper network architecture, enabling reweighted LightGlue to significantly enhance accuracy with a high volume of features.

\begin{table}
\caption{Evaluation for outdoor pose estimation at different sparsity. Bold denotes optimal across all sparsity levels. Underline indicates best using features less than 4000.}
  \centering
\resizebox{\linewidth}{!}{

\begin{tabular}{lccccccc}
\toprule
{\multirow{2}{*}{Method}} & \multirow{2}{*}{Features}  & \multicolumn{3}{c}{RANSAC AUC} & \multicolumn{3}{c}{LO-RANSAC AUC} \\
\cmidrule(lr){3-8}
                          &                            &                 \multicolumn{6}{c}{ 5° / 10° / 20°}            \\
\midrule
SP+SuperGlue              &   \multirow{3}{*}{2048}       & \multicolumn{3}{c}{48.74 / 66.09 / 79.52} & \multicolumn{3}{c}{64.55 / 77.60 / 86.87} \\
SP+LightGlue              &                            & \multicolumn{3}{c}{50.03 / 66.98 / 79.90} & \multicolumn{3}{c}{\underline{\textbf{66.80}} / \underline{\textbf{79.19}} / \underline{\textbf{87.84}}} \\
DISK+LightGlue            &                            & \multicolumn{3}{c}{46.65 / 62.98 / 76.37} & \multicolumn{3}{c}{61.28 / 74.40 / 84.03} \\
\midrule
\multirow{2}{*}{LoFTR(R)} &           2110             & \multicolumn{3}{c}{47.34 / 63.56 / 76.29} & \multicolumn{3}{c}{61.37 / 74.23 / 83.58} \\
                          &           2896             & \multicolumn{3}{c}{\underline{\textbf{51.04}} / \underline{67.57} / \underline{80.01}} & \multicolumn{3}{c}{64.53 / 77.09 / 86.02} \\
\midrule
SP+SuperGlue(R)   & \multirow{3}{*}{12288} & \multicolumn{3}{c}{45.30 / 63.15 / 77.44} & \multicolumn{3}{c}{62.17 / 75.91 / 85.62} \\
SP+LightGlue(R)   &                        & \multicolumn{3}{c}{49.25 / 66.51 / 79.35} & \multicolumn{3}{c}{66.08 / 78.75 / 87.47} \\
DISK+LightGlue(R) &                        & \multicolumn{3}{c}{50.47 / \textbf{67.64} / \textbf{80.28}} & \multicolumn{3}{c}{66.10 / 78.05 / 86.41} \\
\bottomrule
\end{tabular}
}

\label{table:megadepth}
\end{table}

\subsection{Visual Localization}

We evaluate long-term visual localization under semi-dense features.
The Aachen Day-Night~\cite{zhangReferencePoseGeneration2021} (outdoor) and the InLoc~\cite{tairaInLocIndoorVisual2021} (indoor) dataset are used for evaluation.

\subsubsection{Settings}
We generate results for all methods using the HLoc\cite{sarlinCoarseFineRobust2019a} toolbox with default settings for triangulation, image retrieval, RANSAC, and Perspective-n-Point solver. 
The original matchers are evaluated as baselines and the feature extraction and matching settings provided by HLoc are adopted. 
The original SuperGlue and LightGlue perform sparse matching, while the original LoFTR performs semi-dense matching with a maximum number of feature points per image limited to 8192. 
To transfer the detector-based matchers to a sparsity level comparable to LoFTR, we increase the number of feature points per image to 8192 and reweight the matcher.

\subsubsection{Results}
Outdoor evaluation results are shown in \cref{table:aachen}. 
For daytime data, the reweighted semi-dense matchers perform similarly to the original matchers. 
For nighttime data, the reweighted semi-dense matchers with SuperPoint outperform their sparse counterparts. 
LoFTR is less accurate than sparse matchers on daytime data but is more accurate on nighttime data. 
Intriguingly, after being transferred to semi-dense features, detector-based matchers demonstrate better generalization than LoFTR on nighttime data.

Results of indoor data are presented in \cref{table:inloc}. 
Similar to relative pose estimation, reweighted semi-dense matching achieves higher performance gains on indoor data compared to outdoor data. 
More stable improvements are observed at lower distance thresholds, with reweighted SuperGlue surpassing LoFTR at (0.25 m, 10°).

\begin{table}
\caption{Evaluation for visual localization on the Aachen Day-Night~\cite{zhangReferencePoseGeneration2021} benchmark v1.1.
Methods marked with (R) indicate reweighted semi-dense matching. The best performance is marked in bold.}
  \centering
\resizebox{\linewidth}{!}{

\begin{tabular}{lcccccc}
\toprule
\multicolumn{1}{l}{\multirow{2}{*}{Method}} & \multicolumn{3}{c}{Day} & \multicolumn{3}{c}{Night}\\
\cmidrule(lr){2-7}
                  &                 \multicolumn{6}{c}{ (0.25m,2°) / (0.5m,5°) / (1.0m,10°) }    \\
\midrule
SP+SuperGlue      & \multicolumn{3}{c}{ \textbf{90.4} / \textbf{96.5} / 99.3 } & \multicolumn{3}{c}{ 76.4 / 91.1 / \textbf{100.0} } \\
SP+SuperGlue(R)   & \multicolumn{3}{c}{ 90.3 / 96.2 / 99.2 } & \multicolumn{3}{c}{ \textbf{77.5} / 91.1 / 99.5 } \\

SP+LightGlue      & \multicolumn{3}{c}{ 90.3 / 96.1 / 99.3 } & \multicolumn{3}{c}{ 77.0 / 91.6 / \textbf{100.0} } \\
SP+LightGlue(R)   & \multicolumn{3}{c}{ \textbf{90.4} / \textbf{96.5} / 99.2 } & \multicolumn{3}{c}{ \textbf{77.5} / \textbf{92.1} / \textbf{100.0} } \\

DISK+LightGlue    & \multicolumn{3}{c}{ 88.1 / 95.5 / 99.3 } & \multicolumn{3}{c}{ 76.4 / 90.1 / 99.5 } \\
DISK+LightGlue(R) & \multicolumn{3}{c}{ 88.6 / 96.1 / \textbf{99.4} } & \multicolumn{3}{c}{ 74.9 / 90.6 / 99.5 } \\

LoFTR             & \multicolumn{3}{c}{ 88.8 / 95.8 / 98.8 } & \multicolumn{3}{c}{ \textbf{77.5} / 91.1 / 99.5 } \\
\bottomrule
\end{tabular}
}

\label{table:aachen}
\end{table}

\begin{table}
\caption{Evaluation for visual localization on the InLoc~\cite{tairaInLocIndoorVisual2021} benchmark.
Methods marked with (R) indicate reweighted semi-dense matching. The best performance is marked in bold.}
  \centering
\resizebox{\linewidth}{!}{

\begin{tabular}{lcccccc}
\toprule
\multicolumn{1}{l}{\multirow{2}{*}{Method}} & \multicolumn{3}{c}{DUC1} & \multicolumn{3}{c}{DUC2}\\
\cmidrule(lr){2-7}
                  &                 \multicolumn{6}{c}{ (0.25m,10°) / (0.5m,10°) / (1.0m,10°) }    \\
\midrule
SP+SuperGlue      & \multicolumn{3}{c}{ 46.0 / 65.7 / 79.3 } & \multicolumn{3}{c}{ 47.3 / 71.0 / 77.1 } \\
SP+SuperGlue(R)   & \multicolumn{3}{c}{ \textbf{47.5} / 69.2 / 80.8 } & \multicolumn{3}{c}{ \textbf{53.4} / 67.9 / 74.8 } \\

SP+LightGlue      & \multicolumn{3}{c}{ 42.9 / 64.1 / 76.3 } & \multicolumn{3}{c}{ 42.0 / 66.4 / 72.5 } \\
SP+LightGlue(R)   & \multicolumn{3}{c}{ 43.9 / 67.7 / 80.3 } & \multicolumn{3}{c}{ 48.9 / 69.5 / 77.1 } \\

DISK+LightGlue    & \multicolumn{3}{c}{ 42.4 / 59.6 / 73.7 } & \multicolumn{3}{c}{ 36.6 / 56.5 / 68.7 } \\
DISK+LightGlue(R) & \multicolumn{3}{c}{ 43.4 / 64.1 / 77.8 } & \multicolumn{3}{c}{ 38.9 / 55.7 / 65.6 } \\

LoFTR             & \multicolumn{3}{c}{ 46.0 / \textbf{69.7} / \textbf{82.3} } & \multicolumn{3}{c}{ 48.9 / \textbf{74.0} / \textbf{81.7} } \\
\bottomrule
\end{tabular}
}

\label{table:inloc}
\end{table}

\subsection{Analysis of Reweighted SuperGlue}
In \cref{sec:relative_pose}, reweighted dense SuperGlue exhibits superior performance compared to other detector-based methods on ScanNet. 
To analyze the effect of the reweighting, we compare the performance of reweighted SuperGlue against their original counterparts under the same dense features. 
On ScanNet, we employ the original indoor version of SuperGlue to match dense SuperPoint features, referred to as direct dense matching. 
On the MegaDepth dataset, we use the original outdoor version of SuperGlue to match semi-dense SuperPoint features, termed direct semi-dense matching. 
For both direct and reweighted matching on the same dataset, the same feature point settings and matcher parameters are utilized.

As shown in \cref{table:dense_compare}, the reweighted matching achieves higher pose accuracy than the direct matching method on both indoor and outdoor datasets. 
The reweighted matching better leverages the same sets of feature points, indicating superior generalization to dense and semi-dense features for pose estimation. 
The average numbers of matches output by the matchers for each image pair are also recorded.
Interestingly, despite the superior pose estimation performance of the reweighting method, it yields a lower average number of matches compared to direct dense matching.

\begin{table}
  \caption{Comparison with direct dense matching}
    \centering
\resizebox{\linewidth}{!}{

\begin{tabular}{ccccccccc}
\toprule
\multirow{2}{*}{Dataset} & \multicolumn{1}{c}{\multirow{2}{*}{Method}} & \multicolumn{3}{c}{RANSAC AUC} & \multicolumn{3}{c}{LO-RANSAC AUC} & \multirow{2}{*}{Matches} \\
\cmidrule(lr){3-5} \cmidrule(lr){6-8}
                          &                                             & @5° & @10° & @20°     & @5° & @10° & @20° &      \\
\midrule
\multirow{2}{*}{ScanNet}
                          & Direct    & 13.94 & 31.76 & 51.11 & 17.91 & 37.44 & 56.02 & 1106 \\
                          & Reweighted& \textbf{15.82} & \textbf{34.09} & \textbf{54.32} & \textbf{19.51} & \textbf{40.01} & \textbf{59.70} & 737 \\
\midrule
\multirow{2}{*}{MegaDepth}
                          & Direct     & 43.74 & 62.11 & 76.90 & 61.10 & 75.27 & 85.03 & 3518 \\
                          & Reweighted & \textbf{45.30} & \textbf{63.15} & \textbf{77.44} & \textbf{62.17} & \textbf{75.91} & \textbf{85.62} & 2705 \\
\bottomrule
\end{tabular}
}

  \label{table:dense_compare}
\end{table}

To provide a more intuitive analysis of the effect of the reweighting method on the matches, qualitative results are presented in \cref{fig:compare} and \cref{fig:superglue_compare}. 
As illustrated in~\cref{fig:compare} (b), matchers pretrained on sparse features encounter generalization issues when directly matching dense features, resulting in larger pose errors. 
Without additional training, the reweighted SuperGlue generalize better to dense features and achieves lower pose errors in \cref{fig:compare} (c).
The performance enhancement of the reweighted SuperGlue on dense features is primarily attributed to the increased number of matches, which effectively reduces noise in pose estimation.
The results in~\cref{fig:superglue_compare} demonstrate that the direct matching of SuperGlue generates a higher number of matches on feature points of simple image pairs but fewer matches on challenging image pairs, leading to a higher average match count yet a lower pose AUC. 
In contrast, the reweighted SuperGlue exhibits a more balanced quantity of matches and demonstrates greater robustness on difficult image pairs.
As observed in panels (b) and (c) of~\cref{fig:compare}, the improvements of reweighted LightGlue are mainly achieved by better distributed matches.
Although the point pruning in LightGlue is deactivated during the experiment, neither its direct nor reweighted dense matching demonstrates a substantial improvement in match counts compared to reweighted SuperGlue.
This phenomenon may be due to LightGlue's matchability score for each feature point beyond the matching score, which implicitly leads to a sparsification of matches.
We therefore conclude that SuperGlue exhibits stronger length generalization capability than LightGlue.

\begin{figure*}[!t]
  \centering
  \includegraphics[width=0.9\linewidth,trim={0.5cm, 0.0cm, 0.0cm, 0.5cm}]{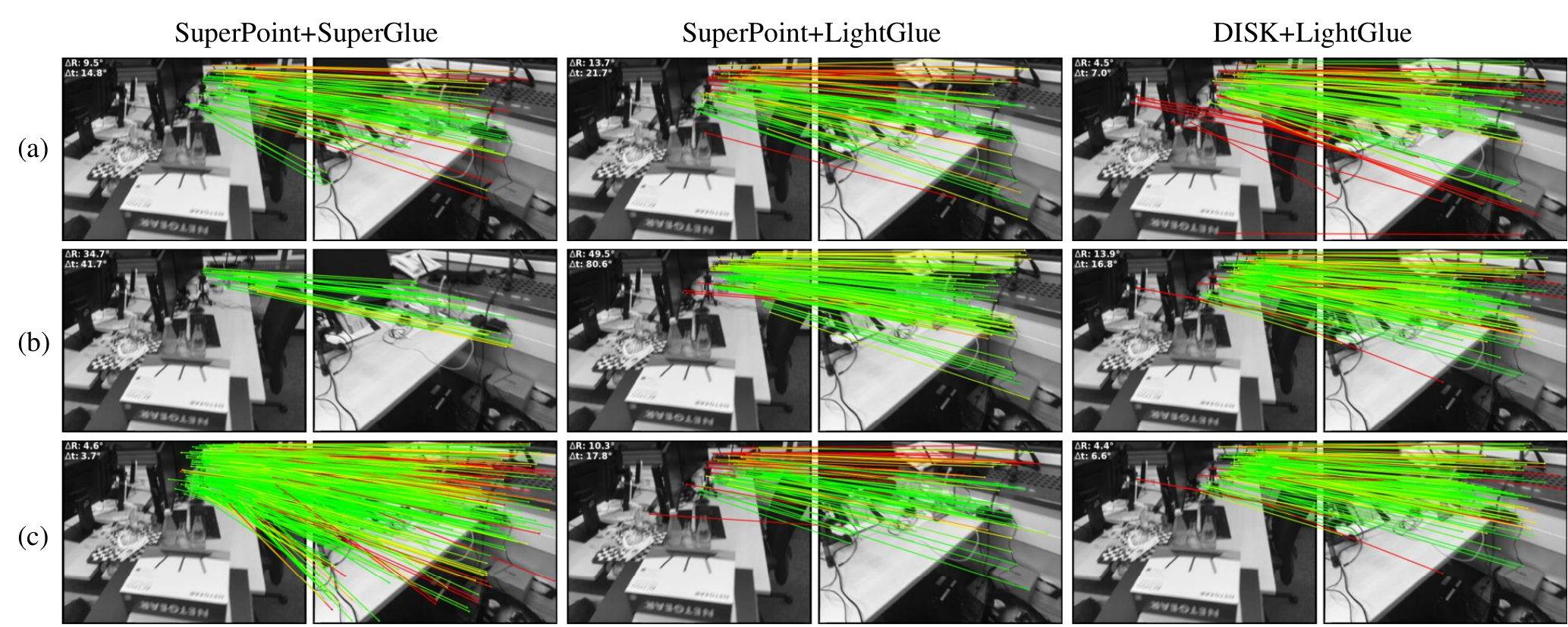}%
  \caption{Results of detector-based matchers under different matching modes.
  (a) Results of direct matching on sparse feature points.
  (b) Results of direct matching on dense feature points.
  (c) Results of reweighted matching on the same dense feature points. 
  }
  \label{fig:compare}
\end{figure*}

\begin{figure}[tb]
  \centering
  \includegraphics[width=0.75\linewidth]{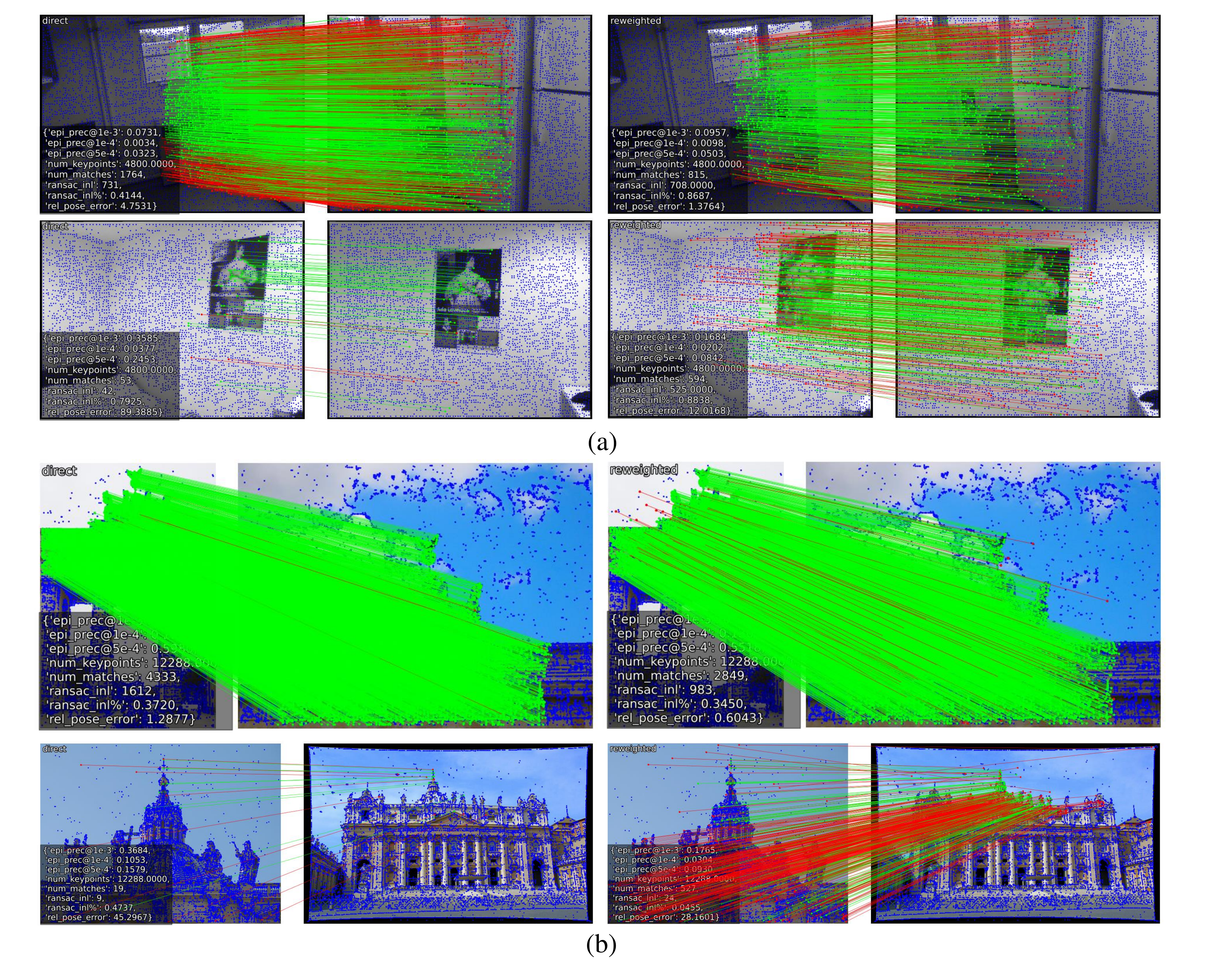}%
  \caption{Results of original and reweighted SuperGlue on image pairs with different difficulties‌. 
  (a) Simple (top) and challenging (bottom) indoor image pairs.
  (b) Simple (top) and challenging (bottom) outdoor image pairs. 
  The left column displays the results of direct matching, while the right column presents the results of reweighted matching.
  }
  \label{fig:superglue_compare}%
\end{figure}

\subsection{Analysis of Sparse LoFTR}

We further analyze the computational cost and accuracy of reweighted LoFTR evaluated in \cref{sec:relative_pose} with varying sparsity levels. 

Through the pipeline in \cref{sec:sparsity_transfer_method}, we conduct sparse training on score heads for the indoor and outdoor versions of LoFTR, separately.
Each score head consists of two convolutional layers and takes a coarse feature map as input.
The coarse matching module of LoFTR is reweighted during sparse training and pruning.
Then we evaluate the networks at varying sparsity levels using different score heads on corresponding test sets and record the feature proportion, FLOPs, and MACs after sparse pruning. 

The results are presented in \cref{table:sparsity_analysis}. 
It can be observed that, due to the use of linear attention, its computational cost is proportional to its sparsity. 
The matcher trained on ScanNet achieves higher sparsity compared to that trained on MegaDepth, attributed to differences in texture richness between the datasets. 
The AUC decay on ScanNet is less significant for low thresholds, suggesting that sparse matching better preserves accuracy on simpler image pairs. 
In contrast, the AUC decay on MegaDepth is consistent across different thresholds for sparse matching.
The score maps and corresponding qualitative results are shown in \cref{fig:loftr_sparse}.
When the proportion of retained features is higher, the score head behaves like an edge detector. 
When the proportion is lower, the score maps undergo further reduction and the score head acts like a point detector.
As the sparsity level varies, sparse LoFTR transfers between edge-based and point-based matching.

\begin{table}
  \caption{Computational cost and accuracy of LoFTR at different levels of sparsity.}
    \centering
\resizebox{\linewidth}{!}{

\begin{tabular}{ccccccc}
\toprule
\multirow{2}{*}{Dataset} & \multirow{2}{*}{Proportion} & \multirow{2}{*}{GFLOPs} & \multirow{2}{*}{GMACs} & \multicolumn{3}{c}{Pose estimation AUC} \\
\cmidrule(lr){5-7}
                          &                            &                        &                       & @5° & @10° & @20°     \\
\midrule
\multirow{2}{*}{ScanNet}
                          & 1.00    & 103.5   & 50.3    & 22.06 & 40.80 & 57.62 \\
                          & 0.22    & 22.7    & 11.0    & 14.46 & 30.26 & 47.10 \\
                          & 0.11    & 11.1    & 5.4     & 7.43  & 18.82 & 35.41 \\
\midrule
\multirow{2}{*}{MegaDepth}
                          & 1.00    & 237.8   & 115.6   & 52.80 & 69.19 & 81.18 \\
                          & 0.35    & 83.3    & 40.5    & 45.31 & 61.56 & 73.83 \\
                          & 0.26    & 60.7    & 29.5    & 41.48 & 57.36 & 70.24 \\ 
\bottomrule
\end{tabular}
}

  \label{table:sparsity_analysis}
\end{table}

\begin{figure}[h]
  \centering
  \includegraphics[width=0.8\linewidth]{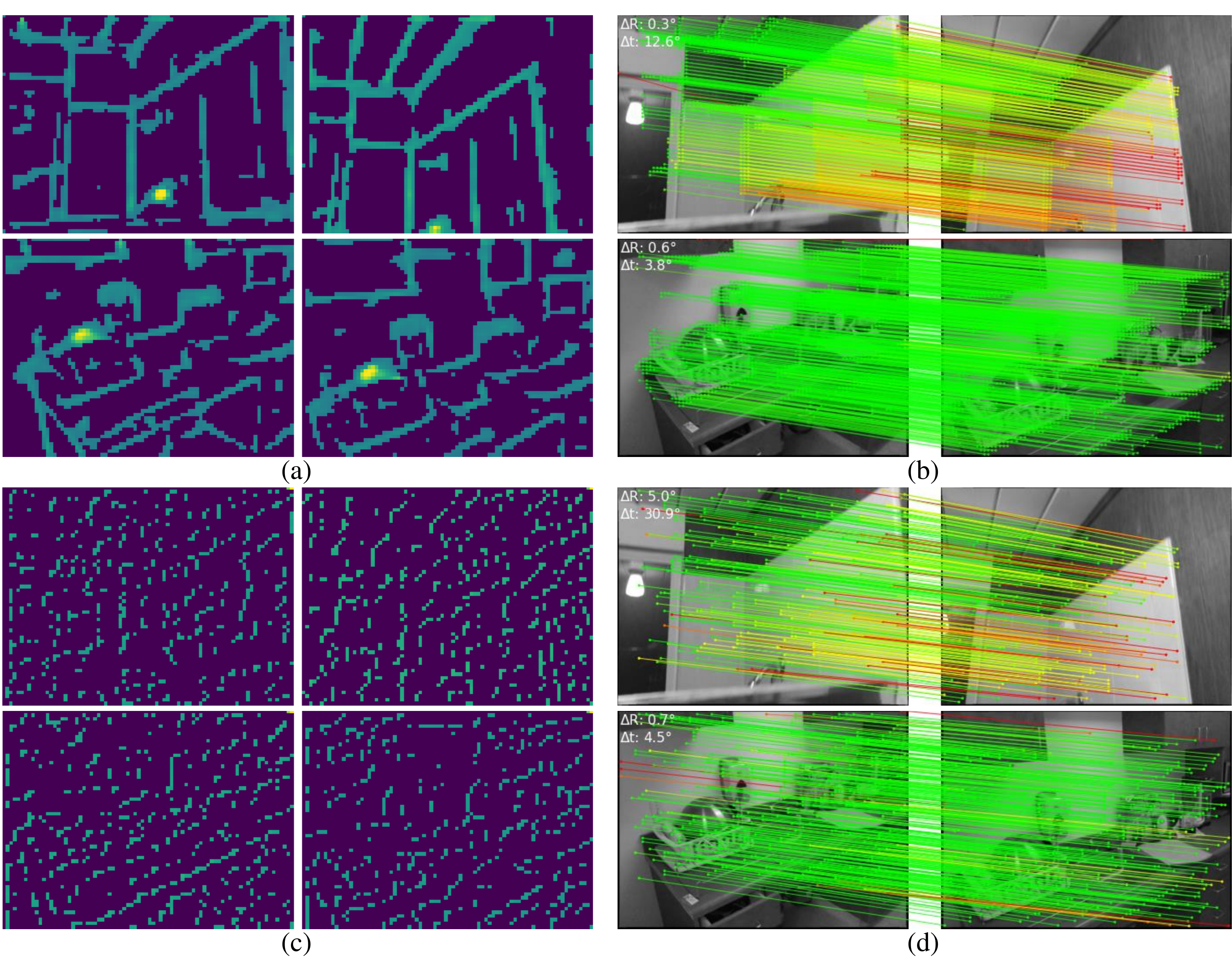}%
  \caption{Matching results of sparse LoFTR.
  (a) and (b) present the score maps and matching results for an average feature proportion of 0.22. 
  (c) and (d) are the score maps and matching results for an average feature proportion of 0.11. 
  }
  \label{fig:loftr_sparse}%
\end{figure}

\begin{figure}[h!]
  \centering
  \includegraphics[width=0.8\linewidth,trim={0.0cm, 0.0cm, 0.0cm, 1.5cm}]{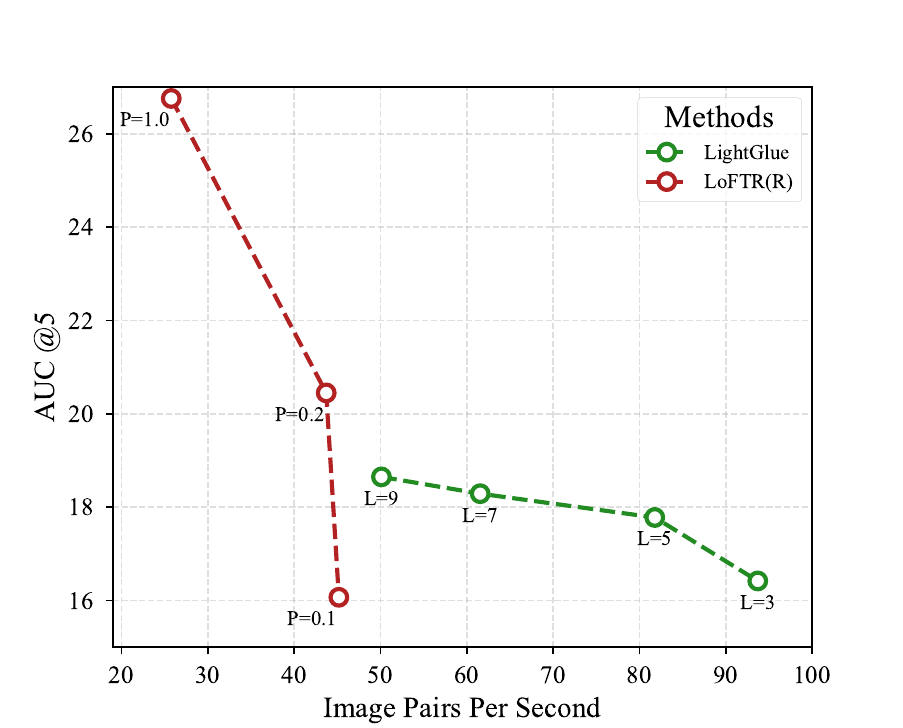}%
  \caption{Speed-Accuracy Trade-off on ScanNet.
    $P$: Proportion of retained features in reweighted LoFTR. $L$: Depth of pruned matching layers in LightGlue.
  }
  \label{fig:pareto}%
\end{figure}

To analyze the efficiency-accuracy trade-off of the reweighted LoFTR, we evaluate its speed and pose accuracy under varying sparsity levels and compared it with LightGlue's depthwise pruning approach. 
The matching performance was tested on the ScanNet dataset using an RTX3090 GPU, maintaining a VGA resolution of $640\times480$ for the images. 
The average frame rate for the matching process is illustrated in~\cref{fig:pareto}.
The complementary nature of sparse reweighting and depth-pruning methods can be observed from the figure. 
The reweighted LoFTR can enhance matching speed by reducing the proportion of features.
However, when the feature proportion falls below 0.2, the marginal speed gain from further reducing the number of features does not compensate for the associated accuracy loss. 
Conversely, LightGlue can improve performance by increasing the number of network layers, but the marginal accuracy gain from adding more layers does not sufficiently offset the speed degradation. 
Future matchers could orthogonally integrate these two strategies to enhance overall efficiency and performance.

\section{Conclusion}
This paper presents a novel reweighting method, which is applicable to a wide range of Transformer-based matching networks and adapts them to distinct sparsity levels without altering network parameters. 
We prove that the reweighted matching network constitutes the asymptotic limit of the detector-based matching network, leading to consistency in probability distribution and aiding generalization to dense features.
Our sparse training and pruning pipeline enables sparse matching for detector-free matchers and preserves their dense performance.
Experiments show that the reweighting method can improve the pose accuracy of SuperGlue and LightGlue on dense indoor feature points.
SuperGlue exhibits stronger length generalization capability than LightGlue.
The reweighted sparse LoFTR achieves comparable accuracy to sparse baselines and offers efficiency-accuracy trade-offs complementary to LightGlue.
\clearpage

{\appendices
\appendices \label{sec:proof}

\section*{Proof of \texorpdfstring{\cref{thm:reweighted_attention_layer}}{theorem on asymptotic equivalence of reweighted attention layer}}
In \cref{sec:rw_attention}, we demonstrate the asymptotic equivalence of the reweighted Transformer and the original Transformer.
To prove this asymptotic equivalence, we first prove \cref{thm:reweighted_attention_layer} concerning the asymptotic equivalence of single attention layer.
Then we prove \cref{thm:rw_attention} for Transformers of arbitrary depth.
Similarly, \cref{thm:rw_matching} can be proved using analogous techniques.

\begin{proof}
    For fixed $m,n$, we omit the superscripts with them and rewrite the computation of the attention layer as follows:
    \begin{equation}
        \begin{split}
        & \mathbf{Y}  = \mathbf{Q} + \sum_{h = 1}^{H} \mathbf{W}_{VO}^{h} \mathbf{V} \mathrm{Sim}(\mathbf{W}_K^{h}\mathbf{K}, \mathbf{W}_Q^{h}\mathbf{Q}) \\
        & \mathbf{Q}' = \mathbf{Y} + \mathbf{W}_2 \mathrm{Act} (\mathbf{W}_1 \mathbf{Y} + \mathbf{b}_1 \mathbf{1}^\top) + \mathbf{b}_2 \mathbf{1}^\top ,
        \end{split}
    \end{equation}
    where $H$ is the number of attention heads, and $\mathbf{W}_{VO}^{h}$, $\mathbf{W}_K^{h}$, and $\mathbf{W}_Q^{h}$ are parameter matrices of the $h$-th attention head. 
    The $\mathbf{W}_1, \mathbf{W}_2$ and $\mathbf{b}_1, \mathbf{b}_2$ are parameter matrices and vectors of the feed forward part.

    Denote the result of the $h$-th attention head as $\mathbf{Y}^h$, and denote the result of the $h$-th reweighted attention head as $\mathbf{Y}^{h*}$:
    \begin{equation}
        \begin{split}
        & \mathbf{Y}^h = \mathbf{W}_{VO}^{h} \mathbf{V} \mathrm{Sim}(\mathbf{W}_K^{h}\mathbf{K}, \mathbf{W}_Q^{h}\mathbf{Q}) \\
        & \mathbf{Y}^{h*} = \mathbf{W}_{VO}^{h} \mathbf{V}^* \mathrm{Sim}^p(\mathbf{W}_K^{h}\mathbf{K}^*, \mathbf{W}_Q^{h}\mathbf{Q}^*).
        \end{split}
    \end{equation}

    For $i \in [m], j \in [n]$, denote $s_{i,j}$ as $\delta(\mathbf{W}_K^{h}\mathbf{K}_i, \mathbf{W}_Q^{h}\mathbf{Q}_j)$,
    and for $i^* \in [m^*]$, $j^* \in [n^*]$, denote $s^*_{i^*,j^*}$ as $\delta(\mathbf{W}_K^{h}\mathbf{K}^*_{i^*}, \mathbf{W}_Q^{h}\mathbf{Q}^*_{j^*})$.
    We have
    \begin{equation} \label{eqn:Yjh}
        \begin{aligned}
           \mathbf{Y}_{j}^{h}
        & =\mathbf{W}_{VO}^h \mathbf{V} \cdot \frac{1}{\sum\limits_{i=1}^{m}{s_{i,j}}}
            \left[ \begin{matrix}
            {{s}_{1,j}}  \\
            {{s}_{2,j}}  \\
            \vdots   \\
            {{s}_{{m},j}}  \\
            \end{matrix} \right] \\ 
        & ={{\mathbf{W}}_{VO}^h}\cdot \frac{1}{\sum\limits_{i=1}^{{m}}{{{s}_{i,j}}}}\sum\limits_{i=1}^{{m}}{{{s}_{i,j}}{{\mathbf{V}}_{i}}} \\ 
        & ={{\mathbf{W}}_{VO}^h}\cdot \frac{1}{ \frac{1}{{m}}\sum\limits_{i=1}^{{m}}{{{s}_{i,j}}} }\cdot \frac{1}{{m}}\sum\limits_{i=1}^{{m}} { {{s}_{i,j}} {{\mathbf{V}}_{i}} }
        .
        \end{aligned}
    \end{equation}

    When $m,n$ approaches infinity, we have
    \begin{equation} \label{eqn:E_s}
        \begin{aligned}
            \frac{1}{m}\sum\limits_{i=1}^{m}{{{s}_{i,j}}} &  =\sum\limits_{{{i}^{*}}=1}^{{{m}^{*}}} \frac{\left| {{\mathcal{I}}_{m}}({{i}^{*}}) \right|}{m} \\
            {} & \cdot \frac{1}{\left| {{\mathcal{I}}_{m}}({{i}^{*}}) \right|}\sum\limits_{i\in {{\mathcal{I}}_{m}}({{i}^{*}})}{\delta (\mathbf{W}_{K}^{h}\mathbf{K}_{i}^{(m)},\mathbf{W}_{Q}^{h}\mathbf{Q}_{j}^{(n)})} \\
            {} & \xrightarrow{P}\sum\limits_{{{i}^{*}}=1}^{{{m}^{*}}}{p({{i}^{*}})\delta (\mathbf{W}_{K}^{h}\mathbf{K}_{{{i}^{*}}}^{*},\mathbf{W}_{Q}^{h}\mathbf{Q}_{{{j}^{*}}}^{*})}  \\
            {} & =\sum\limits_{{{i}^{*}}=1}^{{{m}^{*}}}{p({{i}^{*}})s_{{{i}^{*}},{{j}^{*}}}^{*}},  \\                 
        \end{aligned}
    \end{equation}
    and
    \begin{equation} \label{eqn:E_sV}
            \frac{1}{m}\sum\limits_{i=1}^{m}{{{s}_{i,j}}} {{\mathbf{V}}_{i}^{(m)}}  
            \xrightarrow{P} \sum\limits_{{{i}^{*}}=1}^{{{m}^{*}}}{p({{i}^{*}})s_{{{i}^{*}},{{j}^{*}}}^{*}}  {{\mathbf{V}}_{i^*}^*}
            . 
    \end{equation}

    Substitute \cref{eqn:E_s,eqn:E_sV} into \cref{eqn:Yjh}, we have
    \begin{equation}
        \begin{aligned}
            \mathbf{Y}_{j}^{h(m,n)}
            & \xrightarrow{P} {{\mathbf{W}}_{VO}^h}\cdot \frac{1}{ \sum\limits_{{{i}^{*}}=1}^{m^*} {p({{i}^{*}})s_{{{i}^{*}},{{j}^{*}}}^{*}} } 
              \sum\limits_{{{i}^{*}}=1}^{m^*} {p({{i}^{*}})s_{{{i}^{*}},{{j}^{*}}}^{*}} \mathbf{V}_{{{i}^{*}}}^{*} \\
            & = {{\mathbf{W}}_{VO}^h} \mathbf{V}^{*}\cdot \frac{1}{ \sum\limits_{{{i}^{*}}=1}^{m^*} {p({{i}^{*}})s_{{{i}^{*}},{{j}^{*}}}^{*}} } 
                \left[ \begin{matrix}
                p(1){{s}_{1,j^*}^*}  \\
                p(2){{s}_{2,j^*}^*}  \\
                \vdots   \\
                p({m^*}){{s}_{{m^*},j^*}^*}  \\
                \end{matrix} \right] \\ 
            & = \mathbf{Y}_{j^*}^{h*}.
        \end{aligned}
    \end{equation}

    Therefore, we have
    \begin{equation}
        \begin{aligned}
            \mathbf{Y}_j^{(m,n)} 
            & = \mathbf{Q}_j^{(n)} + \sum_{h = 1}^{H} \mathbf{Y}_{j}^{h(m,n)} \\
            & \xrightarrow{P} \mathbf{Q^*}_{j^*} + \sum_{h = 1}^{H} \mathbf{Y}_{j^*}^{h*} \\
            & = \mathbf{Y}^*_j \text{.}
        \end{aligned}
    \end{equation}

    Because the feed forward part is continuous element-wise function, we obtain
    \begin{equation}
        \begin{split}
        \mathbf{Q}_{j}^{'(m,n)} 
        & =  \mathbf{Y}_j^{(m,n)} + \mathbf{W}_2 \mathrm{Act} (\mathbf{W}_1 \mathbf{Y}_j^{(m,n)} + \mathbf{b}_1 \mathbf{1}^\top) + \mathbf{b}_2 \mathbf{1}^\top \\
        & \xrightarrow{P} \mathbf{Y}^*_{j^*} + \mathbf{W}_2 \mathrm{Act} (\mathbf{W}_1 \mathbf{Y}^*_{j^*} + \mathbf{b}_1 \mathbf{1}^\top) + \mathbf{b}_2 \mathbf{1}^\top \\
        & = \mathbf{Q}^{'*}_{j^*}
        .
        \end{split}
    \end{equation}
    
\end{proof}

\section*{Proof of \texorpdfstring{\cref{thm:rw_attention}}{theorem on reweighted Transformer}}

\begin{proof}
    As described in \cref{sec:preliminaries}, feature points $F_A$ and $F'_A$ only differ in feature descriptors, keeping points coordinates unchanged. 
    Therefore, we only need to examine the convergence values of the updated feature descriptors, namely, the output tokens of the network.
    We first examine the outputs of the embedding layer. 
    Then we employ mathematical induction on the depth of the Transformers $\mathrm{TF}$ and $\mathrm{TF}^{p_A, p_B}$. 

    The embedding layer is the first layer of the Transformer, and it is point-wise, which aligns with mainstream designs like rotary position embedding.
    Thus, the embedding of each feature point is not influenced by other feature points.
    Let the size of $F_A$ and $F_A^*$ be $n_A$ and $n_A^*$, respectively.
    As described in \cref{sec:preliminaries}, $F_A$ are i.i.d. samples of $F_A^*$, and for any index $i, 1 \le i \le n_A$ and its corresponding index $i^*, 1 \le i^* \le n_A^*$, we have $F_{A,i} = F_{A,i^*}^*$.
    Denote the resulting tokens from the embedding layer of $F_A$ and $F_A^*$ as $\mathbf{X}_A$ and $\mathbf{X}_A^*$, respectively.
    Because we don't reweight the point-wise embedding layer, we have $\mathbf{X}_{A,i} = \mathbf{X}_{A,i^*}^*$.
    Similarly, $\mathbf{X}_{B,j} = \mathbf{X}_{B,j^*}^*, 1 \le j \le n_B, 1 \le j^* \le n_B^*$.

    The attention layers aggregates input tokens. 
    Let the number of attention layers in $\mathrm{TF}$ be $l$. 
    Denote the input tokens of the $l$-th attention layer in $\mathrm{TF}$ as $\mathbf{X}_Q$, $\mathbf{X}_K$, and $\mathbf{X}_V$. 
    Similarly, the input tokens of the $l$-th reweighted attention layer in $\mathrm{TF}^{p_A, p_B}$ are denoted as $\mathbf{X}_Q^*$, $\mathbf{X}_K^*$ and $\mathbf{X}_V^*$.
    In a typical Transformer design, we have $\mathbf{X}_K = \mathbf{X}_V$ and $\mathbf{X}_K^* = \mathbf{X}_V^*$.
    Thus, \cref{eqn:condition_shape} is satisfied.
    According to the law of large numbers, \cref{eqn:ratio_converges_to_probability} is satisfied.
    
    When $l = 1$, $\mathrm{TF}$ consists of one embedding layer and one attention layer. 
    We have $\mathbf{X}_Q, \mathbf{X}_K, \mathbf{X}_V \in \left\{\mathbf{X}_A, \mathbf{X}_B\right\} $, and the exact assignment depends on the design choice of the layer.
    According to the discussion on the embedding layer, \cref{eqn:condition_limQ,eqn:condition_limKV} hold true.
    Applying \cref{thm:reweighted_attention_layer}, we obtain that the output tokens of $\mathrm{TF}$ converge in probability to the output tokens of $\mathrm{TF}^{p_A, p_B}$.

    When $l > 1$, applying mathematical induction on Transformers with $l - 1$ attention layers, \cref{eqn:condition_limQ,eqn:condition_limKV} are satisfied.
    Again, applying \cref{thm:reweighted_attention_layer}, \cref{thm:rw_attention} is proved.
    
\end{proof}

\section*{Proof of \texorpdfstring{\cref{thm:rw_matching}}{theorem on reweighted matching function}}
\begin{proof}
Original optimal transport in image matching assigns equal weights to the points to be matched:
\begin{equation}
    \begin{aligned}
       & {{\mathbf{a}}_{i}}=\frac{1}{{{n}_{A}}},{{\mathbf{a}}_{{{n}_{A}}+1}}=1,1\le i\le {{n}_{A}} \\ 
       & {{\mathbf{b}}_{j}}=\frac{1}{{{n}_{B}}},{{\mathbf{b}}_{{{n}_{B}}+1}}=1,1\le j\le {{n}_{B}}, \\ 
      \end{aligned}      
\end{equation}
where ${n}_{A}, {n}_{B}$ are numbers of sampled features of image $A$ and $B$ which tend to infinity.

Reweighted optimal transport employs feature detection probabilities ${p}_{A}$ and ${p}_{B}$ for the weights:
\begin{equation}
    \begin{aligned}
       & \mathbf{a}_{{{i}^{*}}}^{*}={{p}_{A}}({{i}^{*}}),\mathbf{a}_{n_{A}^{*}+1}^{*}=1,1\le {{i}^{*}}\le n_{A}^{*} \\ 
       & \mathbf{b}_{{{j}^{*}}}^{*}={{p}_{B}}({{j}^{*}}),\mathbf{b}_{n_{B}^{*}+1}^{*}=1,1\le {{j}^{*}}\le n_{B}^{*}, \\ 
      \end{aligned}      
\end{equation}
where $n_{A}^{*}, n_{B}^{*}$ represents the total numbers of features in the feature maps of $A$ and $B$.

Denote the score matrices of original matching and reweighted matching as $\overline{\mathbf{S}}$ and $\overline{\mathbf{S}}^*$, respectively.
According to \cref{thm:rw_attention}, the scores converge in probability:
\begin{equation}
    \begin{split}
     &  \forall i\in [n_A],\forall j\in [n_B],\exists {{i}^{*}}\in [n_{A}^{*}],\exists {{j}^{*}}\in [n_{B}^{*}]: \\ 
     & {{\overline{\mathbf{S}}}_{i,j}}\xrightarrow{P}\overline{\mathbf{S}}_{{{i}^{*}},{{j}^{*}}}^*
    \end{split}
\end{equation}

Because we do not change the dustbin score, the elements in the last rows or the last columns of $\overline{\mathbf{S}}$ and $\overline{\mathbf{S}}^*$ equal to the dustbin score $\alpha$.

Define ${{\mathcal{I}}_{{{n}_{A}}}}({{i}^{*}})=\{k|k\in [{{n}_{A}}],{{k}^{*}}={{i}^{*}}\}$, ${{\mathcal{J}}_{{{n}_{B}}}}({{j}^{*}})=\{l|l\in [{{n}_{B}}],{{l}^{*}}={{j}^{*}}\}$.
We have
\begin{equation}
    \begin{aligned}
        \forall {{i}^{*}}\in [n_{A}^{*}],\forall {{j}^{*}}\in [n_{B}^{*}],\forall i,{i}' & \in {{\mathcal{I}}_{{{n}_{A}}}}({{i}^{*}}),\forall j,{j}'\in {{\mathcal{J}}_{{{n}_{B}}}}({{j}^{*}}): \\ 
       {{\overline{\mathbf{S}}}_{i,j}} & ={{\overline{\mathbf{S}}}_{{i}',{j}'}}, \\ 
       \frac{1}{{{n}_{A}}}\left| {{\mathcal{I}}_{{{n}_{A}}}}({{i}^{*}}) \right| & \xrightarrow{P}{{p}_{A}}({{i}^{*}}), \\ 
       \frac{1}{{{n}_{B}}}\left| {{\mathcal{J}}_{{{n}_{B}}}}({{j}^{*}}) \right| & \xrightarrow{P}{{p}_{B}}({{j}^{*}}). \\ 
      \end{aligned}      
\end{equation}

In original matching, sinkhorn algorithm computes the output assignment as 
\begin{equation} \label{eqn:OT_P}
    \text{OT}_{i,j}={{\mathbf{u}}_{i}}{{\mathbf{K}}_{i,j}}{{\mathbf{v}}_{j}}.
\end{equation}
Each iteration of the algorithm updates vectors $\mathbf{u}, \mathbf{v}$ to $\mathbf{{u}'}, \mathbf{{v}'}$:
\begin{equation}
  {{\mathbf{{u}'}}_{i}}=\frac{{{\mathbf{a}}_{i}}}{{{(\mathbf{Kv})}_{i}}},{{\mathbf{{v}'}}_{j}}=\frac{{{\mathbf{b}}_{j}}}{{{({{\mathbf{K}}^{\top }}\mathbf{{u}'})}_{j}}},
\end{equation}
where ${{\mathbf{K}}_{i,j}}={\mathrm{exp}({\frac{{{\overline{\mathbf{S}}}_{i,j}}}{\varepsilon }}})$.

In reweighted matching, the same algorithm computes the output assignment 
\begin{equation} \label{eqn:OT_Pstar}
    \text{OT}_{{{i}^{*}},{{j}^{*}}}^{{{p}_{A}},{{p}_{B}}}=\mathbf{u}_{{{i}^{*}}}^{*}\mathbf{K}_{{{i}^{*}},{{j}^{*}}}^{*}\mathbf{v}_{{{j}^{*}}}^{*}.
\end{equation}
Each iteration updates vector ${\mathbf{{u}}}^{*}, {\mathbf{{v}}}^{*}$ to ${\mathbf{{u}'}}^{*}, {\mathbf{{v}'}}^{*}$:
\begin{equation}
    {\mathbf{{u}'}}^{*}_{{{i}^{*}}}=\frac{\mathbf{a}_{{{i}^{*}}}^{*}}{{{({{\mathbf{K}}^{*}}{{\mathbf{v}}^{*}})}_{{{i}^{*}}}}}, {\mathbf{{v}'}}^{*}_{{{j}^{*}}}=\frac{\mathbf{b}_{{{j}^{*}}}^{*}}{{{({{\mathbf{K}}^{*}}^{\top }{{{\mathbf{{u}'}}}^{*}})}_{{{j}^{*}}}}},
\end{equation}
where ${{\mathbf{K}}_{i^*,j^*}^*}={\mathrm{exp}({\frac{{{\overline{\mathbf{S}}}_{i^*,j^*}^*}}{\varepsilon }}})$.

At the first iteration, $\mathbf{u}, \mathbf{v}, {\mathbf{{u}}}^{*}, {\mathbf{{v}}}^{*}$ can be initialized with arbitrary positive vectors.
For simplicity, we initialize $\mathbf{u}, \mathbf{v}, {\mathbf{{u}}}^{*}, {\mathbf{{v}}}^{*}$ as $\mathbf{a}, \mathbf{b}, \mathbf{a}^{*}, \mathbf{b}^{*}$, respectively.
Thus, the initialized values satisfy $n_A \mathbf{u}_i = p_A(i^*)^{-1} \mathbf{u}_{i^*}^*$, $n_B \mathbf{v}_j = p_B(j^*)^{-1} \mathbf{v}_{j^*}^*$, $i \in [n_A], j \in [n_B]$, and $\mathbf{u}_{n_A+1} = \mathbf{u}^*_{n_A^*+1} = \mathbf{v}_{n_B+1} = \mathbf{v}^*_{n_B^*+1} = 1$.
Now we employ mathematical induction, and assume that for the $l$-th iteration:
\begin{equation} \label{eqn:lim_uv}
    \begin{split}
      n_A \mathbf{u}_i & \xrightarrow{P} p_A(i^*)^{-1} \mathbf{u}_{i^*}^*,  i \in [n_A], \\
      n_B \mathbf{v}_j & \xrightarrow{P} p_B(j^*)^{-1} \mathbf{v}_{j^*}^*,  j \in [n_B], \\
      \mathbf{u}_{n_A+1} & \xrightarrow{P} \mathbf{u}^*_{n_A^*+1} , \\
      \mathbf{v}_{n_B+1} & \xrightarrow{P} \mathbf{v}^*_{n_B^*+1} .
    \end{split}
\end{equation}
Then for $i \in [n_A], j \in [n_B]$, we have
\begin{equation}
    \begin{aligned}
         {{n}_{A}}{{{\mathbf{{u}'}}}_{i}} & ={{(\mathbf{Kv})}_{i}}^{-1} \\ 
       & ={{\left( \alpha \mathbf{v}_{n_B+1} +\sum\limits_{j=1}^{{{n}_{B}}}{{{\mathbf{K}}_{i,j}}{{\mathbf{v}}_{j}}} \right)}^{-1}} \\ 
       & ={{\left( \alpha \mathbf{v}_{n_B+1} +\sum\limits_{{{j}^{*}}=1}^{n_{B}^{*}}{\frac{1}{{{n}_{B}}}}\sum\limits_{j\in {{\mathcal{J}}_{{{n}_{B}}}}({{j}^{*}})}^{{}}{{{\mathbf{K}}_{i,j}}\cdot ({{n}_{B}}{{\mathbf{v}}_{j}})} \right)}^{-1}} \\ 
       & \xrightarrow{P}{{\left( \alpha \mathbf{v}_{n_B^*+1}^* +\sum\limits_{{{j}^{*}}=1}^{n_{B}^{*}}{{{p}_{B}}({{j}^{*}})}\mathbf{K}_{{{i}^{*}},{{j}^{*}}}^{*}\cdot {{p}_{B}}{{({{j}^{*}})}^{-1}}\mathbf{v}_{{{j}^{*}}}^{*} \right)}^{-1}} \\ 
       & ={{({{\mathbf{K}}^{*}}{{\mathbf{v}}^{*}})}_{{{i}^{*}}}}^{-1} \\ 
       & ={{p}_{A}}{{({{i}^{*}})}^{-1}}\mathbf{{u}'}_{{{i}^{*}}}^{*} \\ 
      \end{aligned}      
\end{equation}
\begin{equation}
    \begin{aligned}
       {{n}_{B}}{{{\mathbf{{v}'}}}_{j}} & ={{({{\mathbf{K}}^{\top }}\mathbf{{u}'})}_{j}}^{-1} \\ 
       & ={{\left( \alpha \mathbf{u'}_{n_A+1} +\sum\limits_{i=1}^{{{n}_{A}}}{{{\mathbf{K}}_{i,j}}{{{\mathbf{{u}'}}}_{i}}} \right)}^{-1}} \\ 
       & ={{\left( \alpha \mathbf{u'}_{n_A+1} +\sum\limits_{{{i}^{*}}=1}^{n_{A}^{*}}{\frac{1}{{{n}_{A}}}}\sum\limits_{i\in {{\mathcal{I}}_{{{n}_{A}}}}({{i}^{*}})}^{{}}{{{\mathbf{K}}_{i,j}}({{n}_{A}}{{{\mathbf{{u}'}}}_{i}})} \right)}^{-1}} \\ 
       & \xrightarrow{P}{{\left( \alpha \mathbf{u'}_{n_A^*+1}^* +\sum\limits_{{{i}^{*}}=1}^{n_{A}^{*}}{{{p}_{A}}({{i}^{*}})}\mathbf{K}_{{{i}^{*}},{{j}^{*}}}^{*}\cdot {{p}_{A}}{{({{i}^{*}})}^{-1}}\mathbf{{u}'}_{{{i}^{*}}}^{*} \right)}^{-1}} \\ 
       & ={{p}_{B}}{{({{j}^{*}})}^{-1}}{{\mathbf{{v}'}}}^{*}_{{{j}^{*}}} \\ 
      \end{aligned}      
\end{equation}
Similarly, we obtain $ \mathbf{u'}_{n_A+1} \xrightarrow{P} \mathbf{u'}^*_{n_A^*+1} $ and $\mathbf{v'}_{n_B+1} \xrightarrow{P} \mathbf{v'}^*_{n_B^*+1} $.

Therefore, \cref{eqn:lim_uv} holds in all iterations and in \cref{eqn:OT_P,eqn:OT_Pstar}. 
For each $i^*, j^*$, by summing up all $\text{OT}_{i,j}$ within ${\mathcal{I}}_{{{n}_{A}}}({{i}^{*}})$ and ${{\mathcal{J}}_{{{n}_{B}}}}({{j}^{*}})$, and examining the convergence value in terms of probability, we prove the validity of \cref{thm:rw_matching} in the context of optimal transport.
The validity about matching using dual-softmax can also be proved using analogous techniques.
\end{proof}

}

%

\bibliographystyle{IEEEtran}
\bibliography{ref/ReweightedMatcher.bib}

\begin{thebibliography}{10}
\providecommand{\url}[1]{#1}
\csname url@samestyle\endcsname
\providecommand{\newblock}{\relax}
\providecommand{\bibinfo}[2]{#2}
\providecommand{\BIBentrySTDinterwordspacing}{\spaceskip=0pt\relax}
\providecommand{\BIBentryALTinterwordstretchfactor}{4}
\providecommand{\BIBentryALTinterwordspacing}{\spaceskip=\fontdimen2\font plus
\BIBentryALTinterwordstretchfactor\fontdimen3\font minus
  \fontdimen4\font\relax}
\providecommand{\BIBforeignlanguage}[2]{{%
\expandafter\ifx\csname l@#1\endcsname\relax
\typeout{** WARNING: IEEEtran.bst: No hyphenation pattern has been}%
\typeout{** loaded for the language `#1'. Using the pattern for}%
\typeout{** the default language instead.}%
\else
\language=\csname l@#1\endcsname
\fi
#2}}
\providecommand{\BIBdecl}{\relax}
\BIBdecl

\bibitem{schonbergerStructuremotionRevisited2016}
J.~L. Schonberger and J.-M. Frahm, ``Structure-from-motion revisited,'' in
  \emph{Proceedings of the {{IEEE}} Conference on Computer Vision and Pattern
  Recognition}, 2016, pp. 4104--4113.

\bibitem{gaoCompleteSceneReconstruction2020}
X.~Gao, S.~Shen, L.~Zhu, T.~Shi, Z.~Wang, and Z.~Hu, ``Complete {{Scene
  Reconstruction}} by {{Merging Images}} and {{Laser Scans}},'' \emph{IEEE
  Transactions on Circuits and Systems for Video Technology}, vol.~30, no.~10,
  pp. 3688--3701, Oct. 2020.

\bibitem{chenSurveyDeepLearning2020}
C.~Chen, B.~Wang, C.~X. Lu, N.~Trigoni, and A.~Markham, ``A survey on deep
  learning for localization and mapping: {{Towards}} the age of spatial machine
  intelligence,'' \emph{arXiv preprint arXiv:2006.12567}, 2020.

\bibitem{lynenLargescaleRealtimeVisual2020}
S.~Lynen, B.~Zeisl, D.~Aiger, M.~Bosse, J.~Hesch, M.~Pollefeys, R.~Siegwart,
  and T.~Sattler, ``Large-scale, real-time visual--inertial localization
  revisited,'' \emph{The International Journal of Robotics Research}, vol.~39,
  no.~9, pp. 1061--1084, Aug. 2020.

\bibitem{sarlinSuperGlueLearningFeature2020a}
P.-E. Sarlin, D.~DeTone, T.~Malisiewicz, and A.~Rabinovich, ``{{SuperGlue}}:
  {{Learning Feature Matching With Graph Neural Networks}},'' in
  \emph{Proceedings of the {{IEEE}}/{{CVF Conference}} on {{Computer Vision}}
  and {{Pattern Recognition}}}, 2020, pp. 4938--4947.

\bibitem{lindenbergerLightglueLocalFeature2023a}
P.~Lindenberger, P.-E. Sarlin, and M.~Pollefeys, ``Lightglue: {{Local}} feature
  matching at light speed,'' in \emph{Proceedings of the {{IEEE}}/{{CVF
  International Conference}} on {{Computer Vision}}}, 2023, pp.
  17\,627--17\,638.

\bibitem{sunLoFTRDetectorFreeLocal2021}
J.~Sun, Z.~Shen, Y.~Wang, H.~Bao, and X.~Zhou, ``{{LoFTR}}: {{Detector-Free
  Local Feature Matching}} with {{Transformers}},'' in \emph{2021
  {{IEEE}}/{{CVF Conference}} on {{Computer Vision}} and {{Pattern
  Recognition}} ({{CVPR}})}, Jun. 2021, pp. 8918--8927.

\bibitem{edstedtRoMaRobustDense2024}
J.~Edstedt, Q.~Sun, G.~B{\"o}kman, M.~Wadenb{\"a}ck, and M.~Felsberg,
  ``{{RoMa}}: {{Robust}} dense feature matching,'' in \emph{Proceedings of the
  {{IEEE}}/{{CVF Conference}} on {{Computer Vision}} and {{Pattern
  Recognition}}}, 2024, pp. 19\,790--19\,800.

\bibitem{detoneSuperPointSelfSupervisedInterest2018a}
D.~DeTone, T.~Malisiewicz, and A.~Rabinovich, ``{{SuperPoint}}:
  {{Self-Supervised Interest Point Detection}} and {{Description}},'' in
  \emph{2018 {{IEEE}}/{{CVF Conference}} on {{Computer Vision}} and {{Pattern
  Recognition Workshops}} ({{CVPRW}})}, Jun. 2018, pp. 337--33\,712.

\bibitem{tyszkiewiczDISKLearningLocal2020a}
M.~Tyszkiewicz, P.~Fua, and E.~Trulls, ``{{DISK}}: {{Learning}} local features
  with policy gradient,'' \emph{Advances in Neural Information Processing
  Systems}, vol.~33, pp. 14\,254--14\,265, 2020.

\bibitem{wangEfficientLoFTRSemidense2024a}
Y.~Wang, X.~He, S.~Peng, D.~Tan, and X.~Zhou, ``Efficient {{LoFTR}}:
  {{Semi-dense}} local feature matching with sparse-like speed,'' in
  \emph{Proceedings of the {{IEEE}}/{{CVF Conference}} on {{Computer Vision}}
  and {{Pattern Recognition}}}, 2024, pp. 21\,666--21\,675.

\bibitem{vaswaniAttentionAllYou2017}
A.~Vaswani, N.~Shazeer, N.~Parmar, J.~Uszkoreit, L.~Jones, A.~N. Gomez,
  {\L}.~Kaiser, and I.~Polosukhin, ``Attention is {{All}} you {{Need}},'' in
  \emph{Advances in {{Neural Information Processing Systems}}}, vol.~30.\hskip
  1em plus 0.5em minus 0.4em\relax Curran Associates, Inc., 2017.

\bibitem{fischlerSurveyDeepLearningsample1981}
M.~A. Fischler and R.~C. Bolles, ``Survey on deep learningsample consensus: A
  paradigm for model fitting with applications to image analysis and automated
  cartography,'' \emph{Communications of the ACM}, vol.~24, no.~6, pp.
  381--395, Jun. 1981.

\bibitem{liuBBHomographyJointBinary2015}
S.~Liu, H.~Wang, Y.~Wei, and C.~Pan, ``{{BB-Homography}}: {{Joint Binary
  Features}} and {{Bipartite Graph Matching}} for {{Homography Estimation}},''
  \emph{IEEE Transactions on Circuits and Systems for Video Technology},
  vol.~25, no.~2, pp. 239--250, Feb. 2015.

\bibitem{rostenMachineLearningHighSpeed2006}
E.~Rosten and T.~Drummond, ``Machine {{Learning}} for {{High-Speed Corner
  Detection}},'' in \emph{Computer {{Vision}} -- {{ECCV}} 2006}, A.~Leonardis,
  H.~Bischof, and A.~Pinz, Eds.\hskip 1em plus 0.5em minus 0.4em\relax Berlin,
  Heidelberg: Springer Berlin Heidelberg, 2006, vol. 3951, pp. 430--443.

\bibitem{harrisCombinedCornerEdge1988}
C.~Harris and M.~Stephens, ``A combined corner and edge detector,'' in
  \emph{Alvey Vision Conference}, vol.~15.\hskip 1em plus 0.5em minus
  0.4em\relax Citeseer, 1988, pp. 10--5244.

\bibitem{loweDistinctiveImageFeatures2004}
D.~G. Lowe, ``Distinctive {{Image Features}} from {{Scale-Invariant
  Keypoints}},'' \emph{International Journal of Computer Vision}, vol.~60,
  no.~2, pp. 91--110, 2004.

\bibitem{baySURFSpeededRobust2006}
H.~Bay, T.~Tuytelaars, and L.~Van~Gool, ``{{SURF}}: {{Speeded Up Robust
  Features}},'' in \emph{Computer {{Vision}} -- {{ECCV}} 2006}, ser. Lecture
  {{Notes}} in {{Computer Science}}, A.~Leonardis, H.~Bischof, and A.~Pinz,
  Eds.\hskip 1em plus 0.5em minus 0.4em\relax Berlin, Heidelberg: Springer,
  2006, pp. 404--417.

\bibitem{gaoMultiscaleCornerDetection2007a}
X.~Gao, F.~Sattar, and R.~Venkateswarlu, ``Multiscale {{Corner Detection}} of
  {{Gray Level Images Based}} on {{Log-Gabor Wavelet Transform}},'' \emph{IEEE
  Transactions on Circuits and Systems for Video Technology}, vol.~17, no.~7,
  pp. 868--875, Jul. 2007.

\bibitem{calonderBRIEFComputingLocal2012}
M.~Calonder, V.~Lepetit, M.~Ozuysal, T.~Trzcinski, C.~Strecha, and P.~Fua,
  ``{{BRIEF}}: {{Computing}} a {{Local Binary Descriptor Very Fast}},''
  \emph{IEEE Transactions on Pattern Analysis and Machine Intelligence},
  vol.~34, no.~7, pp. 1281--1298, 2012.

\bibitem{rubleeORBEfficientAlternative2011}
E.~Rublee, V.~Rabaud, K.~Konolige, and G.~R. Bradski, ``{{ORB}}: An efficient
  alternative to {{SIFT}} or {{SURF}},'' in \emph{{{IEEE International
  Conference}} on {{Computer Vision}}, {{ICCV}} 2011, {{Barcelona}}, {{Spain}},
  {{November}} 6-13, 2011}, 2011.

\bibitem{liuFeaturesCombinedBinary2020}
H.~Liu, Q.~Zhang, B.~Fan, Z.~Wang, and J.~Han, ``Features {{Combined Binary
  Descriptor Based}} on {{Voted Ring-Sampling Pattern}},'' \emph{IEEE
  Transactions on Circuits and Systems for Video Technology}, vol.~30, no.~10,
  pp. 3675--3687, Oct. 2020.

\bibitem{verdieTILDETemporallyInvariant2015}
Y.~Verdie, K.~Yi, P.~Fua, and V.~Lepetit, ``{{TILDE}}: {{A Temporally Invariant
  Learned DEtector}},'' in \emph{Proceedings of the {{IEEE Conference}} on
  {{Computer Vision}} and {{Pattern Recognition}}}, 2015, pp. 5279--5288.

\bibitem{dusmanuD2NetTrainableCNN2019}
M.~Dusmanu, I.~Rocco, T.~Pajdla, M.~Pollefeys, J.~Sivic, A.~Torii, and
  T.~Sattler, ``D2-{{Net}}: {{A Trainable CNN}} for {{Joint Description}} and
  {{Detection}} of {{Local Features}},'' in \emph{2019 {{IEEE}}/{{CVF
  Conference}} on {{Computer Vision}} and {{Pattern Recognition}} ({{CVPR}})},
  Jun. 2019, pp. 8084--8093.

\bibitem{revaudR2d2ReliableRepeatable2019}
J.~Revaud, C.~De~Souza, M.~Humenberger, and P.~Weinzaepfel, ``R2d2:
  {{Reliable}} and repeatable detector and descriptor,'' \emph{Advances in
  neural information processing systems}, vol.~32, 2019.

\bibitem{liDecouplingMakesWeakly2022}
K.~Li, L.~Wang, L.~Liu, Q.~Ran, K.~Xu, and Y.~Guo, ``Decoupling makes weakly
  supervised local feature better,'' in \emph{Proceedings of the {{IEEE}}/{{CVF
  Conference}} on {{Computer Vision}} and {{Pattern Recognition}}}, 2022, pp.
  15\,838--15\,848.

\bibitem{barroso-lagunaMatching2DImages2024a}
A.~{Barroso-Laguna}, S.~Munukutla, V.~A. Prisacariu, and E.~Brachmann,
  ``Matching {{2D Images}} in {{3D}}: {{Metric Relative Pose}} from {{Metric
  Correspondences}},'' in \emph{Proceedings of the {{IEEE}}/{{CVF Conference}}
  on {{Computer Vision}} and {{Pattern Recognition}}}, 2024, pp. 4852--4863.

\bibitem{fischlerRandomSampleConsensus1981}
M.~A. Fischler and R.~C. Bolles, ``Random sample consensus: A paradigm for
  model fitting with applications to image analysis and automated
  cartography,'' \emph{Communications of the ACM}, vol.~24, no.~6, pp.
  381--395, Jun. 1981.

\bibitem{yuSketchDescLearningLocal2021}
D.~Yu, L.~Li, Y.~Zheng, M.~Lau, Y.-Z. Song, C.-L. Tai, and H.~Fu,
  ``{{SketchDesc}}: {{Learning Local Sketch Descriptors}} for {{Multi-View
  Correspondence}},'' \emph{IEEE Transactions on Circuits and Systems for Video
  Technology}, vol.~31, no.~5, pp. 1738--1750, May 2021.

\bibitem{panTCDescLearningTopology2022}
H.~Pan, Y.~Chen, Z.~He, F.~Meng, and N.~Fan, ``{{TCDesc}}: {{Learning Topology
  Consistent Descriptors}} for {{Image Matching}},'' \emph{IEEE Transactions on
  Circuits and Systems for Video Technology}, vol.~32, no.~5, pp. 2845--2855,
  May 2022.

\bibitem{raoLearningGeneralDescriptors2023}
Y.~Rao, Y.~Ju, C.~Li, E.~Rigall, J.~Yang, H.~Fan, and J.~Dong, ``Learning
  {{General Descriptors}} for {{Image Matching With Regression Feedback}},''
  \emph{IEEE Transactions on Circuits and Systems for Video Technology},
  vol.~33, no.~11, pp. 6693--6707, Nov. 2023.

\bibitem{xuDomainFeatLearningLocal2024}
R.~Xu, C.~Wang, S.~Xu, W.~Meng, Y.~Zhang, B.~Fan, and X.~Zhang,
  ``{{DomainFeat}}: {{Learning Local Features With Domain Adaptation}},''
  \emph{IEEE Transactions on Circuits and Systems for Video Technology},
  vol.~34, no.~1, pp. 46--59, Jan. 2024.

\bibitem{yeCDbinCompactDiscriminative2020}
J.~Ye, S.~Zhang, T.~Huang, and Y.~Rui, ``{{CDbin}}: {{Compact Discriminative
  Binary Descriptor Learned With Efficient Neural Network}},'' \emph{IEEE
  Transactions on Circuits and Systems for Video Technology}, vol.~30, no.~3,
  pp. 862--874, Mar. 2020.

\bibitem{potjeXFeatAcceleratedFeatures2024}
G.~Potje, F.~Cadar, A.~Araujo, R.~Martins, and E.~R. Nascimento, ``{{XFeat}}:
  {{Accelerated Features}} for {{Lightweight Image Matching}},'' in
  \emph{Proceedings of the {{IEEE}}/{{CVF Conference}} on {{Computer Vision}}
  and {{Pattern Recognition}}}, 2024, pp. 2682--2691.

\bibitem{fuLearningReduceScale2023a}
Y.~Fu, P.~Zhang, B.~Liu, Z.~Rong, and Y.~Wu, ``Learning to {{Reduce Scale
  Differences}} for {{Large-Scale Invariant Image Matching}},'' \emph{IEEE
  Transactions on Circuits and Systems for Video Technology}, vol.~33, no.~3,
  pp. 1335--1348, Mar. 2023.

\bibitem{chenLearningMatchFeatures2021}
H.~Chen, Z.~Luo, J.~Zhang, L.~Zhou, X.~Bai, Z.~Hu, C.-L. Tai, and L.~Quan,
  ``Learning to match features with seeded graph matching network,'' in
  \emph{Proceedings of the {{IEEE}}/{{CVF}} International Conference on
  Computer Vision}, 2021, pp. 6301--6310.

\bibitem{shiClustergnnClusterbasedCoarsefine2022}
Y.~Shi, J.-X. Cai, Y.~Shavit, T.-J. Mu, W.~Feng, and K.~Zhang, ``Clustergnn:
  {{Cluster-based}} coarse-to-fine graph neural network for efficient feature
  matching,'' in \emph{Proceedings of the {{IEEE}}/{{CVF}} Conference on
  Computer Vision and Pattern Recognition}, 2022, pp. 12\,517--12\,526.

\bibitem{jiangAMatFormerEfficientFeature2024}
B.~Jiang, S.~Luo, X.~Wang, C.~Li, and J.~Tang, ``{{AMatFormer}}: {{Efficient
  Feature Matching}} via {{Anchor Matching Transformer}},'' \emph{IEEE
  Transactions on Multimedia}, vol.~26, pp. 1504--1515, 2024.

\bibitem{liuSiftFlowDense2010}
C.~Liu, J.~Yuen, and A.~Torralba, ``Sift flow: {{Dense}} correspondence across
  scenes and its applications,'' \emph{IEEE transactions on pattern analysis
  and machine intelligence}, vol.~33, no.~5, pp. 978--994, 2010.

\bibitem{choyUniversalCorrespondenceNetwork2016a}
C.~B. Choy, J.~Gwak, S.~Savarese, and M.~Chandraker, ``Universal correspondence
  network,'' \emph{Advances in neural information processing systems}, vol.~29,
  2016.

\bibitem{schmidtSelfSupervisedVisualDescriptor2017}
T.~Schmidt, R.~Newcombe, and D.~Fox, ``Self-{{Supervised Visual Descriptor
  Learning}} for {{Dense Correspondence}},'' \emph{IEEE Robotics and Automation
  Letters}, vol.~2, no.~2, pp. 420--427, Apr. 2017.

\bibitem{roccoNCNetNeighbourhoodConsensus2022}
I.~Rocco, M.~Cimpoi, R.~Arandjelovi{\'c}, A.~Torii, T.~Pajdla, and J.~Sivic,
  ``{{NCNet}}: {{Neighbourhood Consensus Networks}} for {{Estimating Image
  Correspondences}},'' \emph{IEEE Transactions on Pattern Analysis and Machine
  Intelligence}, vol.~44, no.~2, pp. 1020--1034, Feb. 2022.

\bibitem{liDualresolutionCorrespondenceNetworks2020a}
X.~Li, K.~Han, S.~Li, and V.~Prisacariu, ``Dual-resolution correspondence
  networks,'' \emph{Advances in Neural Information Processing Systems},
  vol.~33, pp. 17\,346--17\,357, 2020.

\bibitem{truongLearningAccurateDense2021}
P.~Truong, M.~Danelljan, L.~Van~Gool, and R.~Timofte, ``Learning accurate dense
  correspondences and when to trust them,'' in \emph{Proceedings of the
  {{IEEE}}/{{CVF}} Conference on Computer Vision and Pattern Recognition},
  2021, pp. 5714--5724.

\bibitem{jiangCOTRCorrespondenceTransformer2021}
W.~Jiang, E.~Trulls, J.~Hosang, A.~Tagliasacchi, and K.~M. Yi, ``{{COTR}}:
  {{Correspondence Transformer}} for {{Matching Across Images}},'' in
  \emph{2021 {{IEEE}}/{{CVF International Conference}} on {{Computer Vision}}
  ({{ICCV}})}, Oct. 2021, pp. 6187--6197.

\bibitem{wangMatchformerInterleavingAttention2022}
Q.~Wang, J.~Zhang, K.~Yang, K.~Peng, and R.~Stiefelhagen, ``Matchformer:
  {{Interleaving}} attention in transformers for feature matching,'' in
  \emph{Proceedings of the {{Asian Conference}} on {{Computer Vision}}}, 2022,
  pp. 2746--2762.

\bibitem{chenASpanFormerDetectorFreeImage2022}
H.~Chen, Z.~Luo, L.~Zhou, Y.~Tian, M.~Zhen, T.~Fang, D.~McKinnon, Y.~Tsin, and
  L.~Quan, ``{{ASpanFormer}}: {{Detector-Free Image Matching}} with {{Adaptive
  Span Transformer}},'' in \emph{Computer {{Vision}} -- {{ECCV}} 2022},
  S.~Avidan, G.~Brostow, M.~Ciss{\'e}, G.~M. Farinella, and T.~Hassner,
  Eds.\hskip 1em plus 0.5em minus 0.4em\relax Cham: Springer Nature
  Switzerland, 2022, vol. 13692, pp. 20--36.

\bibitem{tangQuadTreeAttentionVision2022}
S.~Tang, J.~Zhang, S.~Zhu, and P.~Tan, ``{{QuadTree Attention}} for {{Vision
  Transformers}},'' Mar. 2022.

\bibitem{zhuPMatchPairedMasked2023}
S.~Zhu and X.~Liu, ``{{PMatch}}: {{Paired Masked Image Modeling}} for {{Dense
  Geometric Matching}},'' in \emph{Proceedings of the {{IEEE}}/{{CVF
  Conference}} on {{Computer Vision}} and {{Pattern Recognition}}}, 2023, pp.
  21\,909--21\,918.

\bibitem{niPatsPatchArea2023}
J.~Ni, Y.~Li, Z.~Huang, H.~Li, H.~Bao, Z.~Cui, and G.~Zhang, ``Pats: {{Patch}}
  area transportation with subdivision for local feature matching,'' in
  \emph{Proceedings of the {{IEEE}}/{{CVF Conference}} on {{Computer Vision}}
  and {{Pattern Recognition}}}, 2023, pp. 17\,776--17\,786.

\bibitem{truongPDCNetEnhancedProbabilistic2023}
P.~Truong, M.~Danelljan, R.~Timofte, and L.~Van~Gool, ``{{PDC-Net}}+:
  {{Enhanced Probabilistic Dense Correspondence Network}},'' \emph{IEEE
  Transactions on Pattern Analysis and Machine Intelligence}, vol.~45, no.~8,
  pp. 10\,247--10\,266, Aug. 2023.

\bibitem{edstedtDKMDenseKernelized2023}
J.~Edstedt, I.~Athanasiadis, M.~Wadenb{\"a}ck, and M.~Felsberg, ``{{DKM}}:
  {{Dense}} kernelized feature matching for geometry estimation,'' in
  \emph{Proceedings of the {{IEEE}}/{{CVF Conference}} on {{Computer Vision}}
  and {{Pattern Recognition}}}, 2023, pp. 17\,765--17\,775.

\bibitem{pressTrainShortTest}
O.~Press, N.~Smith, and M.~Lewis, ``Train {{Short}}, {{Test Long}}:
  {{Attention}} with {{Linear Biases Enables Input Length Extrapolation}},'' in
  \emph{International {{Conference}} on {{Learning Representations}}}, 2022.

\bibitem{neishiRelationPositionInformation2019}
M.~Neishi and N.~Yoshinaga, ``On the relation between position information and
  sentence length in neural machine translation,'' in \emph{Proceedings of the
  23rd {{Conference}} on {{Computational Natural Language Learning}}
  ({{CoNLL}})}, 2019, pp. 328--338.

\bibitem{ruossRandomizedPositionalEncodings2023}
A.~Ruoss, G.~Del{\'e}tang, T.~Genewein, J.~{Grau-Moya}, R.~Csord{\'a}s,
  M.~Bennani, S.~Legg, and J.~Veness, ``Randomized {{Positional Encodings Boost
  Length Generalization}} of {{Transformers}},'' in \emph{The 61st {{Annual
  Meeting Of The Association For Computational Linguistics}}}, 2023.

\bibitem{daiTransformerXLAttentiveLanguage2019}
Z.~Dai, Z.~Yang, Y.~Yang, J.~G. Carbonell, Q.~Le, and R.~Salakhutdinov,
  ``Transformer-{{XL}}: {{Attentive Language Models}} beyond a {{Fixed-Length
  Context}},'' in \emph{Proceedings of the 57th {{Annual Meeting}} of the
  {{Association}} for {{Computational Linguistics}}}, 2019, pp. 2978--2988.

\bibitem{likhomanenkoCapeEncodingRelative2021}
T.~Likhomanenko, Q.~Xu, G.~Synnaeve, R.~Collobert, and A.~Rogozhnikov, ``Cape:
  {{Encoding}} relative positions with continuous augmented positional
  embeddings,'' \emph{Advances in Neural Information Processing Systems},
  vol.~34, pp. 16\,079--16\,092, 2021.

\bibitem{chiKerpleKernelizedRelative2022}
T.-C. Chi, T.-H. Fan, P.~J. Ramadge, and A.~Rudnicky, ``Kerple: {{Kernelized}}
  relative positional embedding for length extrapolation,'' \emph{Advances in
  Neural Information Processing Systems}, vol.~35, pp. 8386--8399, 2022.

\bibitem{suRoformerEnhancedTransformer2024}
J.~Su, M.~Ahmed, Y.~Lu, S.~Pan, W.~Bo, and Y.~Liu, ``Roformer: {{Enhanced}}
  transformer with rotary position embedding,'' \emph{Neurocomputing}, vol.
  568, p. 127063, 2024.

\bibitem{xiaoTheoryLengthGeneralization2024}
C.~Xiao and B.~Liu, ``A {{Theory}} for {{Length Generalization}} in
  {{Learning}} to {{Reason}},'' Mar. 2024.

\bibitem{anilExploringLengthGeneralization2022}
C.~Anil, Y.~Wu, A.~Andreassen, A.~Lewkowycz, V.~Misra, V.~Ramasesh, A.~Slone,
  G.~{Gur-Ari}, E.~Dyer, and B.~Neyshabur, ``Exploring length generalization in
  large language models,'' \emph{Advances in Neural Information Processing
  Systems}, vol.~35, pp. 38\,546--38\,556, 2022.

\bibitem{weiChainofthoughtPromptingElicits2022}
J.~Wei, X.~Wang, D.~Schuurmans, M.~Bosma, F.~Xia, E.~Chi, Q.~V. Le, and
  D.~Zhou, ``Chain-of-thought prompting elicits reasoning in large language
  models,'' \emph{Advances in neural information processing systems}, vol.~35,
  pp. 24\,824--24\,837, 2022.

\bibitem{gaoSARTargetIncremental2024}
F.~Gao, L.~Kong, R.~Lang, J.~Sun, J.~Wang, A.~Hussain, and H.~Zhou, ``{{SAR
  Target Incremental Recognition Based}} on {{Features With Strong
  Separability}},'' \emph{IEEE Transactions on Geoscience and Remote Sensing},
  vol.~62, pp. 1--13, 2024.

\bibitem{kongFewShotClassIncrementalSAR2025}
L.~Kong, F.~Gao, X.~He, J.~Wang, J.~Sun, H.~Zhou, and A.~Hussain, ``Few-{{Shot
  Class-Incremental SAR Target Recognition}} via {{Orthogonal Distributed
  Features}},'' \emph{IEEE Transactions on Aerospace and Electronic Systems},
  vol.~61, no.~1, pp. 325--341, Feb. 2025.

\bibitem{sinkhornConcerningNonnegativeMatrices1967}
R.~Sinkhorn and P.~Knopp, ``Concerning nonnegative matrices and doubly
  stochastic matrices,'' \emph{Pacific Journal of Mathematics}, vol.~21, no.~2,
  pp. 343--348, 1967.

\bibitem{cuturiSinkhornDistancesLightspeed2013}
M.~Cuturi, ``Sinkhorn distances: {{Lightspeed}} computation of optimal
  transport,'' \emph{Advances in neural information processing systems},
  vol.~26, 2013.

\bibitem{daiScanNetRichlyAnnotated3D2017}
A.~Dai, A.~X. Chang, M.~Savva, M.~Halber, T.~Funkhouser, and M.~Nie{\ss}ner,
  ``{{ScanNet}}: {{Richly-Annotated 3D Reconstructions}} of {{Indoor
  Scenes}},'' in \emph{2017 {{IEEE Conference}} on {{Computer Vision}} and
  {{Pattern Recognition}} ({{CVPR}})}, 2017, pp. 2432--2443.

\bibitem{liMegadepthLearningSingleview2018}
Z.~Li and N.~Snavely, ``Megadepth: {{Learning}} single-view depth prediction
  from internet photos,'' in \emph{Proceedings of the {{IEEE}} Conference on
  Computer Vision and Pattern Recognition}, 2018, pp. 2041--2050.

\bibitem{zhangReferencePoseGeneration2021}
Z.~Zhang, T.~Sattler, and D.~Scaramuzza, ``Reference {{Pose Generation}} for
  {{Long-term Visual Localization}} via {{Learned Features}} and {{View
  Synthesis}},'' \emph{International Journal of Computer Vision}, vol. 129,
  no.~4, pp. 821--844, Apr. 2021.

\bibitem{tairaInLocIndoorVisual2021}
H.~Taira, M.~Okutomi, T.~Sattler, M.~Cimpoi, M.~Pollefeys, J.~Sivic, T.~Pajdla,
  and A.~Torii, ``{{InLoc}}: {{Indoor Visual Localization}} with {{Dense
  Matching}} and {{View Synthesis}},'' \emph{IEEE Transactions on Pattern
  Analysis and Machine Intelligence}, vol.~43, no.~4, pp. 1293--1307, Apr.
  2021.

\bibitem{sarlinCoarseFineRobust2019a}
P.-E. Sarlin, C.~Cadena, R.~Siegwart, and M.~Dymczyk, ``From coarse to fine:
  {{Robust}} hierarchical localization at large scale,'' in \emph{Proceedings
  of the {{IEEE}}/{{CVF}} Conference on Computer Vision and Pattern
  Recognition}, 2019, pp. 12\,716--12\,725.

\end{thebibliography}


 




\begin{IEEEbiography}[{\includegraphics[width=1in,height=1.25in,clip,keepaspectratio]{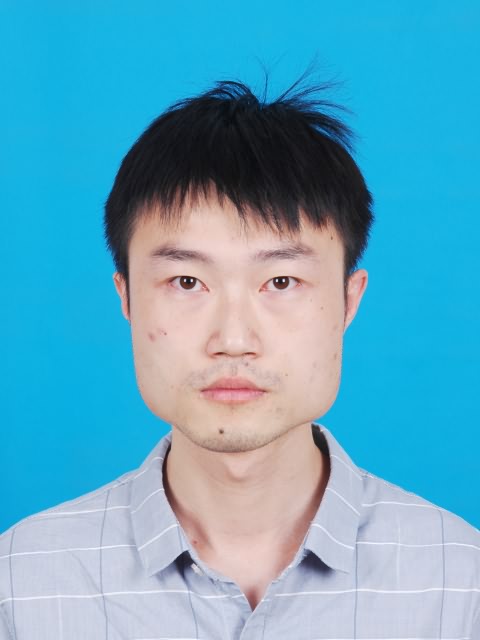}}]{Ya Fan}
  received his M.Sc. degree in the School of Electronic Information Engineering, Beihang University, Beijing 100191, China. He is currently pursuing the Ph.D. degree in Electronic Information Engineering from Beihang University, focusing on computer vision for simultaneous localization and mapping. His research interests include deep learning, computer vision and robotic science.
\end{IEEEbiography}

\begin{IEEEbiography}[{\includegraphics[width=1in,height=1.25in,clip,keepaspectratio]{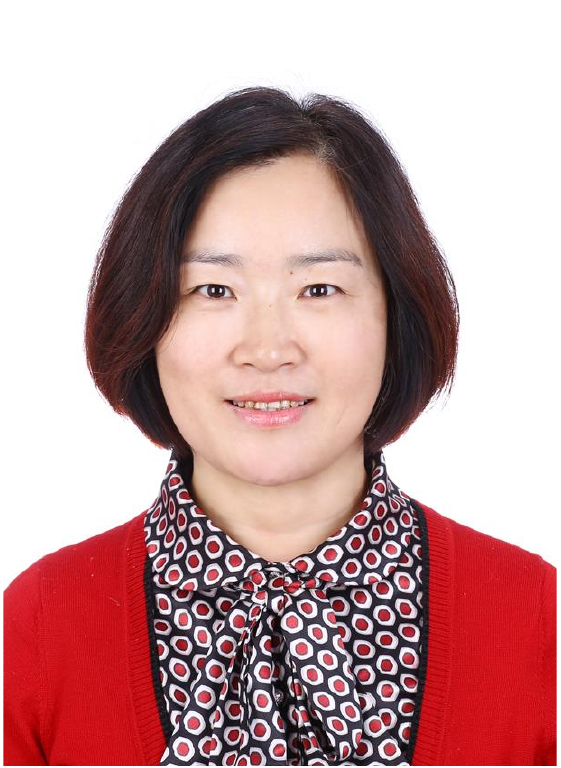}}]{Rongling Lang}
  was born in 1975. She received the Ph.D. degree in automatic control from Northwestern Polytechnical University, Xi'an, China, in 2005. She is currently with the School of Electronic and Information Engineering, Beihang University, Beijing, China. Her research interests include machine learning and vision navigation.
\end{IEEEbiography}


\end{document}